\documentclass[twoside,journal]{IEEEtran}
\IEEEoverridecommandlockouts
\renewcommand\IEEEkeywordsname{Index Terms}
\ifCLASSINFOpdf
\else
\fi
\usepackage{xcolor,soul,framed} 
\colorlet{shadecolor}{yellow}
\usepackage[pdftex]{graphicx}
\graphicspath{{../pdf/}{../jpeg/}}
\DeclareGraphicsExtensions{.pdf,.jpeg,.png}
\usepackage{cite}
\usepackage{amsmath,amssymb,amsfonts}
\usepackage{algpseudocode}
\PassOptionsToPackage{ruled, linesnumbered}{algorithm2e}
\usepackage{algorithm2e}
\SetKw{Break}{break out of the loop}
\usepackage{graphicx}
\usepackage{textcomp}
\usepackage{comment}
\usepackage{units}
\usepackage{url}
\usepackage{geometry}
\usepackage{bbding}
\usepackage[caption=false,font=footnotesize]{subfig}
\usepackage{mathrsfs}
\usepackage{xcolor}
\definecolor{myblue}{rgb}{0.0, 0.5, 1.0}
\definecolor{myred}{rgb}{1.0, 0.13, 0.32}
\definecolor{mygreen}{rgb}{0.31, 0.68, 0.07}
\definecolor{grn}{rgb}{0.0, 0.5, 0.0}
\usepackage[pdftex]{graphicx}
\DeclareGraphicsExtensions{.pdf,.jpeg,.png}
\def\BibTeX{{\rm B\kern-.05em{\sc i\kern-.025em b}\kern-.08em
    T\kern-.1667em\lower.7ex\hbox{E}\kern-.125emX}}

\newtheorem{lemma}{Lemma}

\newtheorem{remark}{Remark}

\usepackage{pgfplots}
\usepackage{pgfplotstable}
\usepackage{tikz}
\usetikzlibrary{calc}

\usepackage{hyperref}
\hypersetup{
	colorlinks=true,
	linkcolor=blue,
	filecolor=magenta,      
	urlcolor=blue,
	citecolor=blue,
}
\usepackage{todonotes}
\usepackage{balance}
\allowdisplaybreaks

\begin{document}
\bstctlcite{IEEEexample:BSTcontrol}

   \title{\huge  Energy Efficiency in Microwave Linear Analog Computer (MiLAC)-Enabled Communications} 
    	\newgeometry {top=25.4mm,left=19.1mm, right= 19.1mm,bottom =19.1mm}%
\author{Ahmed Magbool,~\IEEEmembership{Member,~IEEE,} and Marco Di Renzo,~\IEEEmembership{Fellow,~IEEE}\thanks{Ahmed Magbool and Marco Di Renzo are with the Centre for Telecommunications Research, Department of Engineering, King's College London, WC2R 2LS London, United Kingdom (e-mail: ahmed.1.magbool@kcl.ac.uk; marco.di\_renzo@kcl.ac.uk). Marco Di Renzo is also with CNRS and CentraleSup\'elec, Institute of Electronics and Digital Technologies (IETR), Avenue de la Boulaie, 35576 Cesson-S\'evign\'e, France (e-mail: marco.direnzo@centralesupelec.fr). \par
This work was supported in part by the European Research Council (ERC) under the European Union’s Horizon Europe Programme WePhICom (agreement number 101225119), as well as by the European Union through the Horizon Europe project COVER under grant agreement number 101086228, the Horizon Europe project UNITE under grant agreement number 101129618, the Horizon Europe project INSTINCT under grant agreement number 101139161, and the Horizon Europe project TWIN6G under grant agreement number 101182794, as well as by the Agence Nationale de la Recherche (ANR) through the France 2030 project ANR-PEPR Networks of the Future under grant agreement NF-YACARI 22-PEFT-0005, and by the CHIST-ERA project PASSIONATE under grant agreements CHIST-ERA-22-WAI-04 and ANR-23-CHR4-0003-01. Also, the work of M. Di Renzo was supported in part by the Engineering and Physical Sciences Research Council (EPSRC), part of UK Research and Innovation, and the UK Department of Science, Innovation and Technology through the CHEDDAR Telecom Hub under grant EP/Y037421/1, through the HASC Telecom Hub under grant EP/Y037197/1, and through the TITAN Telecom Hub under grant EP/Y037243/1.}\vspace{-1.1cm}}

\maketitle
\begin{abstract}
Microwave linear analog computers (MiLACs) have emerged as a promising architecture for energy-efficient wireless communications by shifting signal processing from the digital to the analog domain using tunable impedance networks. Several MiLAC architectures have recently been proposed, including the single-layer MiLAC (SLM), two-layer MiLAC (TLM), and hybrid digital-MiLAC (HDM). In this paper, we investigate the energy efficiency (EE) of these architectures and establish a unified framework for their modeling and optimization. Specifically, we formulate EE maximization problems for the SLM, TLM, and HDM architectures under transmit power, user rate, and architecture-specific constraints, and develop a dimensionality reduction technique with successive convex approximation (SCA)-based algorithms to efficiently solve the resulting non-convex problems. We further derive a computationally efficient solution for EE maximization based on a search over only $(K+1)$ closed-form candidate solutions, where $K$ is the number of users, and analyze the asymptotic EE in the large-antenna regime under negligible quantization noise. Our analysis shows that the EE of the SLM, TLM, and HDM architectures scales as $\ln(N)/N^2$, whereas conventional digital beamforming (DBF) scales as $\ln(N)/N$, where $N$ denotes the number of transmit antennas. Despite these different scaling laws, the faster EE decay of MiLAC-based architectures becomes relevant only at very large antenna dimensions, typically involving thousands of antennas, while they maintain superior EE over practically relevant finite antenna regimes. Finally, we derive an approximation of the number of antennas required to achieve the maximum EE for each architecture and the corresponding EE. Simulation results show that, over practically relevant antenna regimes, MiLAC-based architectures achieve substantially higher EE than conventional fully digital and hybrid analog-digital beamforming.

\vspace{-0.1cm}
\end{abstract}
\begin{IEEEkeywords}
Microwave linear analog computer (MiLAC), analog signal processing, energy efficiency.
 \end{IEEEkeywords}

\IEEEpeerreviewmaketitle
\vspace{-0.3cm}
\section{Introduction} \label{sec:intro}
\vspace{-0.15cm}

The increasing demand for wireless connectivity has motivated the development of advanced network architectures with higher capacity and spectral efficiency (SE). To this end, massive multiple-input multiple-output (MIMO) has emerged as a key enabling technology, equipping base stations (BSs) with large-scale antenna arrays to simultaneously serve multiple users over the same time-frequency resources and providing significant gains in SE, spatial multiplexing, and link reliability. However, conventional fully digital beamforming (DBF) requires a dedicated radio-frequency (RF) chain for each antenna, leading to substantial hardware power consumption and making energy efficiency (EE) a critical bottleneck for practical massive MIMO deployment~\cite{2015_Bjornson}.

To address the power consumption of large-scale antenna arrays, several energy-efficient MIMO architectures have been proposed~\cite{2025_You}, including hybrid analog-digital beamforming (HAD)~\cite{2016_Sohrabi}, low-resolution ADC/DAC architectures~\cite{2017_Abbas}, intelligent metasurfaces~\cite{2025_Magbool1,2026_Magbool}, and movable antenna systems~\cite{2026_Zhu}. More recently, microwave linear analog computers (MiLACs) have emerged as a promising architecture for energy-efficient beamforming~\cite{2025_Nerini}. A MiLAC is an analog computing architecture that performs linear signal processing through programmable microwave circuits, where tunable impedance networks implement configurable linear transformations for beamforming~\cite{2025_Nerini1}. By shifting signal processing from the digital to the analog domain, MiLACs can reduce reliance on power-intensive digital processing and RF chains, thereby enabling low-complexity, low-power beamforming for large-scale antenna arrays~\cite{2025_Nerini,2025_Nerini1}.

The use of MiLACs for computationally efficient mathematical operations was introduced in~\cite{2025_Nerini}, where they were shown to implement signal processing operations with significantly reduced complexity. This was later extended to wireless communications by applying MiLACs to large-scale MIMO beamforming~\cite{2025_Nerini1}. The work of~\cite{2026_Nerini2,2026_Nerini3} further demonstrated that MiLACs can achieve the same capacity as DBF in point-to-point MIMO systems while substantially reducing power consumption. However, practical hardware constraints prevent MiLAC beamforming from fully matching DBF performance in multi-user multiple-input single-output (MU-MISO) systems, creating a fundamental tradeoff between performance and power savings~\cite{2026_Wu}. Recent works have also explored MiLAC architectures with reduced circuit complexity~\cite{2026_Nerini4,2026_Zhang1} and extended their applications to simultaneous active and passive beamforming~\cite{2026_Nerini5}, wideband systems~\cite{2026_Peng}, and sensing applications~\cite{2026_Liu,2026_Zhang2}.

The conventional MiLAC architecture, referred to as the single-layer MiLAC (SLM), consists of an active front end with a power allocation stage and RF chains, followed by a passive MiLAC network. Two enhanced architectures have been proposed: the hybrid digital-MiLAC (HDM), which incorporates a small DBF module in the active front end~\cite{2026_Wu}, and the two-layer MiLAC (TLM), which adds a MiLAC layer before power allocation~\cite{2026_Zhou}. Both are shown to achieve optimal MU-MISO beamforming performance by matching the DBF performance~\cite{2026_Zhou}.

While a key motivation for MiLACs is their potential to improve the EE of large-scale wireless systems, comprehensive studies of the EE of MiLAC-based architectures remain limited. The authors of~\cite{2026_Zhang} developed an EE maximization framework for MiLAC-aided MU-MISO systems and characterized the EE-SE tradeoff. However, the framework does not consider minimum rate constraints, which are critical for satisfying quality-of-service (QoS) requirements, particularly in millimeter-wave (mmWave) systems with severe path loss and channel disparities~\cite{2025_Magbool}. Moreover, the study considers only the conventional single-layer MiLAC (SLM) architecture and does not investigate the EE performance or EE-SE tradeoffs of the TLM and HDM architectures, nor does it characterize their asymptotic EE scaling behavior.

Motivated by this, we investigate the EE of SLM, TLM, and HDM architectures and compare them with the DBF and hybrid analog-digital (HAD) architectures. The main contributions of this paper are summarized as follows:
\begin{itemize}
    \item We develop a unified system model for the SLM, TLM, and HDM architectures, covering their transmit, receive, and power consumption models, while accounting for low-resolution RF chains through the additive quantization noise model (AQNM).

    \item We formulate EE maximization problems for the considered architectures, incorporating the maximum transmit power budget, minimum user rate requirements, and architecture-specific constraints.

    \item Unlike~\cite{2026_Zhang}, which directly employs a minimum mean squared error (MMSE)-based solution, the inclusion of minimum user rate constraints makes our optimization problem more challenging. We therefore develop efficient solution methods based on dimensionality reduction and successive convex approximation (SCA).

    \item We derive low-complexity EE solutions using large-antenna approximations under negligible quantization noise, requiring a search over only $(K+1)$ closed-form candidate solutions, where $K$ denotes the number of users. These solutions are highly accurate for orthogonal channels and provide useful insights into general system setups. We further show that the EE of the SLM, TLM, and HDM architectures scales as $\ln(N)/N^2$, compared with $\ln(N)/N$ for DBF, where $N$ denotes the number of transmit antennas, indicating a faster EE decay for MiLAC-based architectures in the asymptotically large-antenna regime, despite their superior EE over practically relevant finite antenna regimes. Finally, we derive an approximate optimal number of antennas that maximizes EE and its corresponding value.

    \item We present extensive simulations showing that MiLAC-based architectures achieve higher EE than conventional DBF and HAD systems over practical antenna regimes. The TLM achieves the highest EE by realizing DBF-equivalent performance with lower hardware power consumption, while HDM becomes more efficient with very low-resolution RF chains by mitigating quantization noise through digital preprocessing.
\end{itemize}

The rest of the paper is organized as follows. Section~\ref{sec:sys_model} introduces the signal models for the SLM, TLM, and HDM architectures, as well as their power consumption models. Section~\ref{sec:prob_for} formulates the EE maximization problems, while Section~\ref{sec:sol} describes the proposed solution methods. Section~\ref{sec:Asy_ana} presents the asymptotic analysis. Section~\ref{sec:sim} discusses the numerical results, and Section~\ref{sec:conc} concludes the paper.

\textit{Notation:} Bold lowercase and uppercase letters denote vectors and matrices, respectively. $\Re\{\cdot\}$, $|\cdot|$, and $(\cdot)^*$ denote the real part, magnitude, and complex conjugate, while $(\cdot)^{\mathrm{T}}$ and $(\cdot)^{\mathrm{H}}$ denote the transpose and conjugate transpose. $\|\cdot\|_2$ and $\|\cdot\|_{\mathrm{F}}$ denote the Euclidean and Frobenius norms, respectively. $\mathbf{I}_a$ and $\mathbf{0}_{a\times b}$ denote the $a\times a$ identity matrix and the $a\times b$ zero matrix, respectively. $\mathrm{diag}(\mathbf{A})$ denotes the diagonal matrix formed by the
diagonal entries of $\mathbf{A}$, i.e., the matrix obtained by setting all
off-diagonal entries of $\mathbf{A}$ to zero, while
$\mathrm{diag}(a_1,\dots,a_N)$ denotes the diagonal matrix with diagonal
entries $a_1,\dots,a_N$. $[\mathbf{A}]_{a:b,c:d}$ denotes the submatrix of $\mathbf{A}$ formed by rows $a$--$b$ and columns $c$--$d$. $\mathbb{C}$ denotes the set of complex numbers, $j=\sqrt{-1}$, and $\mathbb{E}\{\cdot\}$ the expectation operator. $\mathcal{CN}(\mathbf{a},\mathbf{B})$ denotes a complex Gaussian distribution with mean $\mathbf{a}$ and covariance matrix $\mathbf{B}$. The notation $\xrightarrow[a\rightarrow\infty]{\mathrm{a.s.}}$ denotes almost sure convergence as $a\rightarrow\infty$. Finally, $\mathcal{O}(\cdot)$ denotes the big-O computational complexity.

\vspace{-0.3cm}
\section{System Model} \label{sec:sys_model}

\begin{figure}
         \centering 
         \includegraphics[width=0.95\columnwidth]{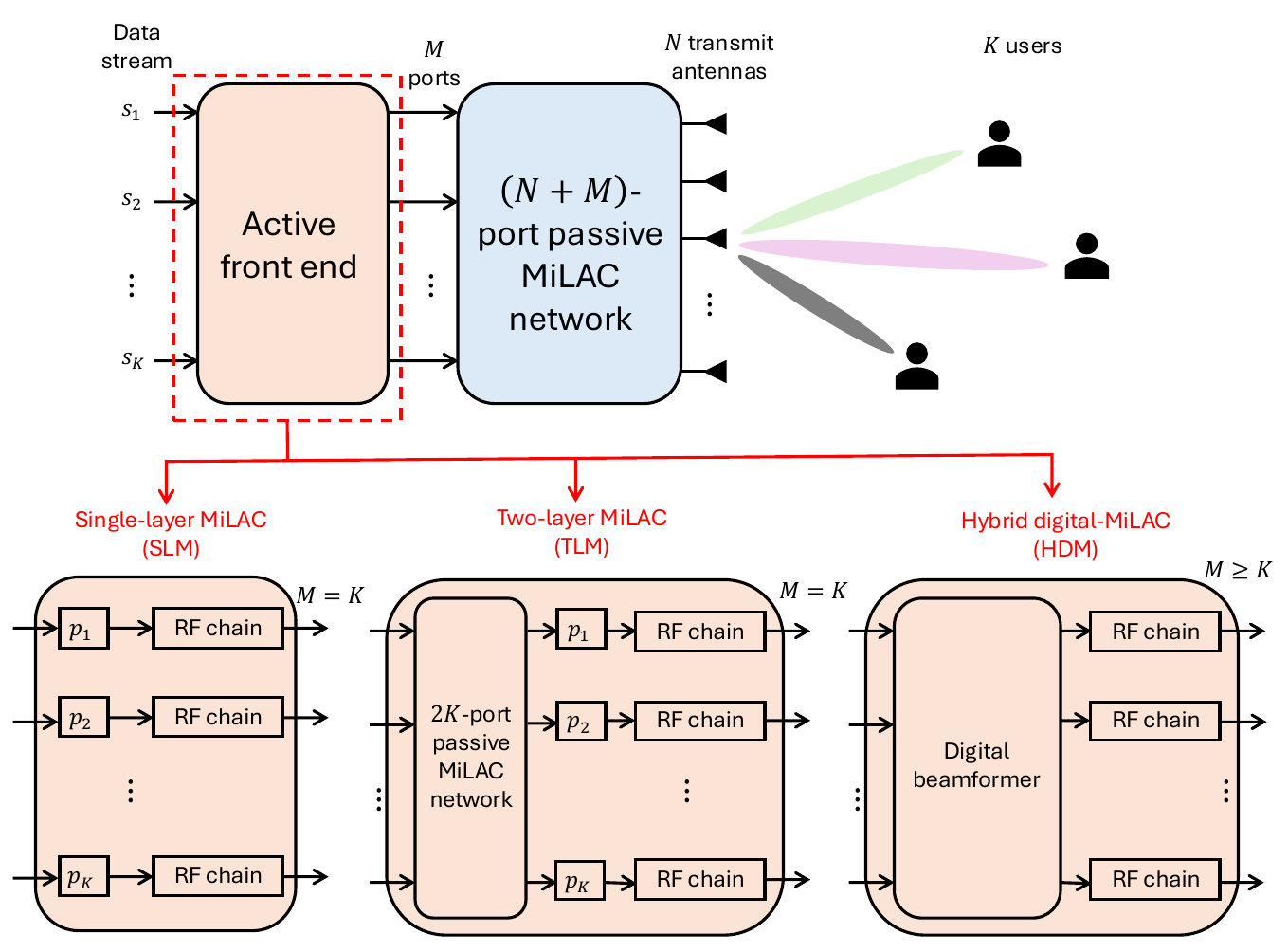}
         \vspace{-0.2cm}
        \caption{Considered system model consisting of an active front end and a passive MiLAC network for the SLM, TLM, and HDM architectures.}
        \label{fig:sys_model}
\end{figure}

We consider a downlink communication system consisting of a BS equipped with $N$ transmit antennas and serving $K$ single-antenna users. As depicted in Fig.~\ref{fig:sys_model}, the system employs an active front end followed by a passive MiLAC network. Our goal is to study the EE of three MiLAC-based systems: (i) the SLM system, whose front end comprises power allocation followed by $M=K$ RF chains; (ii) the TLM system, whose front end consists of a smaller $2K$-port MiLAC network followed by $M=K$ RF chains, and (iii) the HDM system, whose front end consists of a digital beamformer followed by $M \geq K$ RF chains. 

In the following, we first present the transmit and receive signal models, followed by the power consumption models for the three architectures.

\vspace{-0.4cm}

{
\subsection{Transmit and Receive Signal Models}

We first develop a general transmit and receive signal model that captures a general front end. Let $g\in\{\mathrm{SLM},\mathrm{TLM},\mathrm{HDM}\}$ denote the considered architecture. We assume that each RF chain is equipped with a $b$-bit DAC and employ the AQNM to characterize the quantization noise~\cite{2026_Zhang}. Accordingly, the output signal of the active front end for architecture $g$ is expressed as
\begin{equation}
    \mathbf{c}^{g}
    =
    \varpi \mathbf{A}^{g}\mathbf{s}
    +
    \mathbf{n}_{\mathrm{q}}^{g},
\end{equation}
where $\mathbf{s}=[s_1,\dots,s_K]^{\mathrm{T}}\in\mathbb{C}^{K\times1}$ denotes the data vector, with $s_k$ representing the data symbol intended for the $k$-th user, and $\mathbf{s}\sim\mathcal{CN}(\mathbf{0}_{K\times1},\mathbf{I}_K)$. Moreover, $\mathbf{A}^{g}\in\mathbb{C}^{M\times K}$ denotes the active beamforming matrix of architecture $g$, whose structure depends on the considered architecture, and \(\varpi\) denotes the quantization gain factor, whose value depends on the DAC resolution and is reported in~\cite{2019_Dai}. Also, $\mathbf{n}_{\mathrm{q}}^{g}\sim\mathcal{CN}(\mathbf{0},\mathbf{R}_{\mathrm{q}}(\mathbf{A}^g))$ is the quantization noise where $\mathbf{R}_{\mathrm{q}}(\mathbf{A}^{g})\triangleq\varpi(1-\varpi)\mathrm{diag}\left(\mathbf{A}^{g}(\mathbf{A}^{g})^{\mathrm{H}}\right)$ denotes the quantization noise covariance matrix~\cite{2026_Zhang}. 

The output signal of the active front end is then processed by the MiLAC network, resulting in the transmitted signal
\begin{equation}
    \mathbf{x}^{g}
    =
    \mathbf{F}\mathbf{c}^{g}
    =
    \varpi\mathbf{F}\mathbf{A}^{g}\mathbf{s}
    +
    \mathbf{F}\mathbf{n}_{\mathrm{q}}^{g},
\end{equation}
where $\mathbf{F} = \frac{1}{2}
    \left[
    \mathbf{\Phi}
    \right]_{M+1:M+N,\;1:M}\in\mathbb{C}^{N\times M}$ denotes the MiLAC beamforming matrix, which maps the $M$ input ports connected to the RF chains to the $N$ output ports connected to the transmit antennas, where $\mathbf{\Phi}$ denotes the scattering matrix of the passive MiLAC network. The relationship between the scattering matrix and the network admittance matrix is given by $\mathbf{\Phi}
    =
    \left(\mathbf{I}+Z_0\mathbf{Y}\right)^{-1}
    \left(\mathbf{I}-Z_0\mathbf{Y}\right)$, where $Z_0$ denotes the reference impedance. For a reciprocal and lossless MiLAC network, the scattering matrix should satisfy $\mathbf{\Phi}
    =
    \mathbf{\Phi}^{\mathrm T}$ and $ 
    \mathbf{\Phi}^{\mathrm H}\mathbf{\Phi}
    =
    \mathbf{I}_{N+M}$\cite{2026_Wu}.

The received signal at the $k$-th user is given by
\begin{equation}
    y_k^{g}
    =
    \mathbf{h}_k^{\mathrm H}\mathbf{x}^{g}
    +
    n_k,
\end{equation}
where $\mathbf{h}_k^{\mathrm H}\in\mathbb{C}^{1\times N}$ denotes the channel between the transmit antenna array and the $k$-th user, and $n_k\sim\mathcal{CN}(0,\sigma_k^2)$ denotes the additive white Gaussian noise (AWGN).

Accordingly, the signal-to-interference-plus-noise ratio
 (SINR) of the $k$-th user is expressed as
\begin{equation}
    \gamma_k^{g}
    (\mathbf A^{g},\mathbf F)
    =
    \frac{
    \varpi^2
    \left|
    \mathbf h_k^{\mathrm H}
    \mathbf F\mathbf a_k^{g}
    \right|^2
    }
    {
    \varpi^2
    \sum_{i\in\mathcal K\setminus\{k\}}
    \left|
    \mathbf h_k^{\mathrm H}
    \mathbf F\mathbf a_i^{g}
    \right|^2
    +
    \varsigma_k(\mathbf A^{g},\mathbf F)
    +
    \sigma_k^2
    },
\end{equation}
where $\mathbf a_i^{g}$ denotes the $i$-th column of $\mathbf A^{g}$, $\varsigma_k(\mathbf A^{g},\mathbf F)
    =
    \varpi(1-\varpi)
    \mathbf h_k^{\mathrm H}
    \mathbf F
    \mathrm{diag}
    \left(
    \mathbf A^{g}
    (\mathbf A^{g})^{\mathrm H}
    \right)
    \mathbf F^{\mathrm H}
    \mathbf h_k$ is the received quantization noise power at user $k$~\cite{2026_Zhang}, and $\mathcal K=\{1,\dots,K\}$.

Therefore, the achievable sum rate is expressed as
\begin{equation}
    R_{\mathrm{sum}}^{g}
    (\mathbf A^{g},\mathbf F)
    =
    \sum_{k\in\mathcal K}
    \underbrace{\log_2
    \left(
    1+
    \gamma_k^{g}
    (\mathbf A^{g},\mathbf F)
    \right)}_{R^g_k}.
\end{equation}
Next, we present $\mathbf{A}^g$ for the three architectures.

\subsubsection{SLM} For the SLM architecture, the active front end consists of power allocation followed by $K$ RF chains, where each RF chain directly feeds one input port of the single-layer MiLAC network. Hence, the active beamforming matrix $\mathbf A^{\mathrm{SLM}}
    =
    \mathbf P$, where $\mathbf P
    \triangleq
    \mathrm{diag}
    (\sqrt{p_1},\dots,\sqrt{p_K})$ is the power allocation matrix.

\subsubsection{TLM} For the TLM architecture, the active front end consists of a first MiLAC layer followed by power allocation and $K$ RF chains. The first MiLAC layer performs an analog transformation before the DACs, while the second MiLAC layer is represented by the common MiLAC matrix $\mathbf F$. Therefore, the active beamforming matrix is given by $\mathbf A^{\mathrm{TLM}}
    =
    \mathbf P \boldsymbol{\Xi},$ where $\boldsymbol{\Xi}$ denotes the beamforming matrix of the first MiLAC layer.

\subsubsection{HDM} For the HDM architecture, the active front end consists of a digital precoder followed by $M\geq K$ RF chains. The digital precoder provides additional degrees of control before the MiLAC transformation, and hence $ \mathbf A^{\mathrm{HDM}}
    =
    \mathbf D$, where $\mathbf D\in\mathbb C^{M\times K}$ denotes the digital precoding matrix.}

\vspace{-0.3cm}

\subsection{Power Consumption Model}

The total power consumption of a communication system consists of the power consumed by the baseband processing $P_{\mathrm{BB}}$, as well as the static power required by the underlying hardware components $P_{\mathrm{cir}}$~\cite{2016_Buzzi}.

The baseband power consumption of architecture $g$ is
\begin{equation}
    P_{\mathrm{BB}}(\mathbf{A}^{g})
    =
    \frac{\varpi}{\rho}
    \left\|
    \mathbf{A}^{g}
    \right\|_{\mathrm{F}}^2,
\end{equation}
where $\rho$ denotes the power amplifier efficiency.

The circuit power consumption depends on the considered architecture, as different MiLAC configurations require different numbers of RF chains and tunable microwave components. The impact of these parameters on the overall energy efficiency is further discussed in Section~\ref{sec:sim}. The detailed power consumption parameters of the considered architectures are summarized in Table~\ref{tab:2}. For the SLM architecture, the circuit power consumption is given by
\begin{equation}
\begin{split}
P_{\mathrm{cir}}^{\mathrm{SLM}}
=&\,
P_{\mathrm{LO}}
+
K\!\Big(
P_{\mathrm{H}}
+
2P_{\mathrm{M}}
+
2P_{\mathrm{VGA}}
+ 2P_{\mathrm{LP}} \\
&
+
2P_{\mathrm{ADC}}
\Big)
+
\frac{(K+N)(K+N+1)}{2}
P_{\mathrm{IT}}.
\end{split}
\label{eq:SLM_p_cir}
\end{equation}
For the TLM architecture, an additional MiLAC layer is introduced, requiring extra tunable microwave components. Consequently, the circuit power consumption is expressed as
\begin{equation}
\begin{split}
P_{\mathrm{cir}}^{\mathrm{TLM}}
=&\,
P_{\mathrm{LO}}
+
K\!\left(
P_{\mathrm{H}}
+
2P_{\mathrm{M}}
+
2P_{\mathrm{VGA}}
+
2P_{\mathrm{LP}}
+
2P_{\mathrm{ADC}}
\right)\\
&
+
\left(
\frac{(K+N)(K+N+1)}{2}
+
K(2K+1)
\right)
P_{\mathrm{IT}}.
\end{split}
\end{equation}
Finally, the HDM architecture combines a DBF stage with a MiLAC layer. Therefore, its circuit power consumption is given by
\begin{equation}
\begin{split}
P_{\mathrm{cir}}^{\mathrm{HDM}}
=&
P_{\mathrm{LO}}
+
M\!\Big(
P_{\mathrm{H}}
+
2P_{\mathrm{M}}
+
2P_{\mathrm{VGA}}
+
2P_{\mathrm{LP}}\\
&
+
2P_{\mathrm{ADC}}
\Big) +
\frac{(M+N)(M+N+1)}{2}
P_{\mathrm{IT}}.
\end{split}
\end{equation}

Thus, the total power consumption of architecture $g$ is
\begin{equation}
P_{\mathrm{tot}}^{g}(\mathbf{A}^{g})
=
P_{\mathrm{BB}}^{g}(\mathbf{A}^{g})
+
P_{\mathrm{cir}}^{g},
\end{equation}
and the corresponding EE is expressed as
\begin{equation}
    \eta^{g}(\mathbf{A}^{g},\mathbf{F})
    =
    \frac{
    R_{\mathrm{sum}}^{g}(\mathbf{A}^{g},\mathbf{F})
    }
    {
    P_{\mathrm{tot}}^{g}(\mathbf{A}^{g})
    } .
\end{equation}

\vspace{-0.3cm}

\section{Problem Formulation} \label{sec:prob_for}
 \subsubsection{SLM} For the SLM architecture, the effective beamforming matrix can be represented directly by $\mathbf{W}\triangleq
    \mathbf{F}
    \mathbf{P}$, making it analogous to the DBF matrix. To account for the reciprocal and lossless constraints of the MiLAC network, $\mathbf{W}$ must satisfy $\mathbf{W}^{\mathrm{H}}\mathbf{W}
    \preccurlyeq
    \mathrm{diag}(p_1,\dots,p_K),$ as shown in~\cite{2026_Wu}.

Moreover, the received quantization noise power can be written as
\begin{equation}
\begin{aligned}
    & \varsigma_k(\mathbf{W})
    =
    \varpi(1-\varpi)
    \mathbf{h}_k^{\mathrm H}
    \mathbf{F}
    \text{diag}(\mathbf{P}\mathbf{P})
    \mathbf{F}^{\mathrm H}
    \mathbf{h}_k
    \! = \! \varpi (1 \\
    &  - \varpi)
    \mathbf{h}_k^{\mathrm H}
    \mathbf{W}\mathbf{W}^{\mathrm H}
    \mathbf{h}_k  \! = \!
    \varpi(1-\varpi)
    \left\|
    \mathbf{W}^{\mathrm H}
    \mathbf{h}_k
    \right\|_2^2.
\end{aligned}
\end{equation}
Accordingly, after applying the transformation $\mathbf{W}=\mathbf{F}\mathbf{P}$, the SINR, user rate, sum rate, and EE are expressed as functions of the effective beamforming matrix $\mathbf{W}$.

The baseband power consumption can also be expressed in terms of the power allocation matrix $\mathbf{P}$ as follows:
\begin{equation}
    P_{\mathrm{BB}}^{\mathrm{SLM}}(\mathbf{P})
    =
    \frac{\varpi}{\rho} \sum_{k \in \mathcal{K}} p_k.
\end{equation}

 The EE maximization problem can be formulated as follows:
\begin{subequations}
\label{eq:PF_SLM}
\begin{align}
(\mathcal{P}_1):\quad
\underset{\mathbf{W},\mathbf{P}}{\max}\quad
&
\eta^{\mathrm{SLM}}(\mathbf{W},\mathbf{P})\\
\text{s.t.}\quad
&
\varpi\sum_{k\in\mathcal K}p_k
\le
P_{\mathrm T}, \label{eq:PF_SLM_C1}\\
&
\mathbf{W}^{\mathrm H}\mathbf{W}
\preccurlyeq
\mathrm{diag}(p_1,\ldots,p_K), \label{eq:PF_SLM_C2}\\
&
R_k^{\mathrm{SLM}}(\mathbf{W})
\ge
r,\quad
\forall k\in\mathcal K \label{eq:PF_SLM_C3}.
\end{align}
\end{subequations}
Here,~\eqref{eq:PF_SLM_C1} imposes the transmit power budget, where $P_{\mathrm{T}}$ denotes the transmit power budget,~\eqref{eq:PF_SLM_C2} characterizes the feasible set of the effective beamforming matrix, while~\eqref{eq:PF_SLM_C3} ensures that each user satisfies the minimum rate requirement, where $r$ denotes the minimum rate threshold.
\begin{table*}[t]
\centering
\caption{Circuit power consumption models for the SLM, TLM, and HDM architectures. In the table, $f_{\mathrm{s}}$ denotes the DAC sampling frequency. The typical power consumption values of the individual components are based on~\cite{2018_Ribeiro,2026_Zhang} and~\cite{2017_Roth}.}
\vspace{-0.2cm}
\begin{tabular}{|l|l|c|c|c|c|}
\hline

\textbf{Component} & \textbf{Typical Value} & 
\multicolumn{3}{c|}{\textbf{Quantity / Count}} \\
\cline{3-5}
& & \textbf{SLM} & \textbf{TLM} & \textbf{HDM} \\
\hline

$P_{\mathrm{LO}}$: Local oscillator
& 22.5 mW
& $1$
& $1$
& $1$ \\
\hline

$P_{\mathrm{H}}$: Hybrid coupler
& 3 mW
& $K$
& $K$
& $M$ \\
\hline

$P_{\mathrm{M}}$: Mixer
& 0.3 mW
& $2K$
& $2K$
& $2M$ \\
\hline

$P_{\mathrm{VGA}}$: Variable gain amplifier
& 0.8–2 mW
& $2K$
& $2K$
& $2M$ \\
\hline

$P_{\mathrm{LP}}$: Low-pass filter
& 22.5 mW
& $2K$
& $2K$
& $2M$ \\
\hline

$P_{\mathrm{ADC}}$: ADC/DAC
& $1.5 \times 10^{-5}2^b + 9 \times 10^{-12} f_\mathrm{s}b$ 
& $2K$
& $2K$
& $2M$ \\
\hline

$P_{\mathrm{IT}}$: Impedance tuning
& $8\,\mu\mathrm{W}$
& $\frac{(K+N)(K+N+1)}{2}$
& $\frac{(K+N)(K+N+1)}{2}+K(2K+1)$
& $\frac{(M+N)(M+N+1)}{2}$ \\
\hline

\end{tabular}
\label{tab:2}
\vspace{-0.7cm}
\end{table*}

\subsubsection{TLM} The effective beamforming matrix for the TLM architecture is defined as $  \mathbf{W}
    \triangleq
    \mathbf{F}
    \mathbf{P}
    \boldsymbol{\Xi}.$ Unlike the SLM architecture, the TLM architecture can eliminate the constraint~\eqref{eq:PF_SLM_C2}, thereby enabling the realization of a fully digital beamformer as shown in~\cite{2026_Zhou}. Specifically, for any desired DBF matrix $\mathbf{W}_{\mathrm{d}}$, its singular value decomposition (SVD) is given by $\mathbf{W}_{\mathrm{d}}
    =
    \mathbf{U}
    \mathbf{\Sigma}
    \mathbf{V}^{\mathrm{H}},$ where $\mathbf{U}$ and $\mathbf{V}$ contain the left and right singular vectors, respectively, and $\mathbf{\Sigma}$ is a diagonal matrix containing the singular values. Accordingly, the two MiLAC layers and the power allocation matrix can be selected as $ \mathbf{F}
    =
    \frac{1}{2}\mathbf{U},
    \mathbf{P}
    =
    4\mathbf{\Sigma}$ and $
    \boldsymbol{\Xi}
    =
    \frac{1}{2}\mathbf{V}^{\mathrm{H}}$~\cite{2026_Zhou}.
    
The received quantization noise power at the $k$-th user can be expressed as
\begin{equation}
\begin{aligned}
    &\varsigma_k(\mathbf{W})
    =
    \varpi(1-\varpi)
    \mathbf{h}_{k}^{\mathrm H}
    \mathbf{F}
    \mathrm{diag}(\mathbf{P} \boldsymbol{\Xi}\boldsymbol{\Xi}^\mathrm{H} \mathbf{P})
    \mathbf{F}^{\mathrm H}
    \mathbf{h}_{k}\\
   \! & \! =\!
    \varpi(1-\varpi)
    \mathbf{h}_{k}^{\mathrm H}
    \mathbf{W}
    \mathbf{W}^{\mathrm H}
    \mathbf{h}_{k}=
    \varpi(1-\varpi)
    \left\|
    \mathbf{W}^{\mathrm H}
    \mathbf{h}_{k}
    \right\|_2^2 .
\end{aligned}
\end{equation}

The baseband power consumption can also be expressed in terms of the effective beamforming matrix $\mathbf{W}$ as follows:
\begin{equation}
    P_{\mathrm{BB}}^{\mathrm{TLM}}(\mathbf{W})
    =
    \frac{\varpi}{\rho}
    \left\|
    \mathbf{W}
    \right\|_{\mathrm{F}}^{2}.
\end{equation}

Accordingly, after applying the transformation $\mathbf{W}=\mathbf{F}\mathbf{P}\boldsymbol{\Xi}$, the EE maximization problem for the TLM architecture is formulated as
\begin{subequations}
\label{eq:PF_TLM}
\begin{align}
(\mathcal{P}_2):\quad
\underset{\mathbf{W}}{\max}\quad
&
\eta^{\mathrm{TLM}}(\mathbf{W})\\
\text{s.t.}\quad
&
\varpi
\left\|
\mathbf{W}
\right\|_{\mathrm F}^{2}
\leq
P_{\mathrm T},
\label{eq:PF_TLM_C1}\\
&
R_k^{\mathrm{TLM}}(\mathbf{W})
\geq
r,\quad
\forall k\in\mathcal K .
\label{eq:PF_TLM_C2}
\end{align}
\end{subequations}

\subsubsection{HDM} For the HDM architecture, the effective beamforming matrix can be defined as $ \mathbf{W}
    \triangleq
    \mathbf{F}\mathbf{D}$. Accordingly, the EE maximization problem for the HDM architecture is
\begin{subequations}
\label{eq:PF_HDM}
\begin{align}
(\mathcal{P}_3):\quad
\underset{\mathbf{F},\mathbf{D}}{\max}\quad
&
\eta^{\mathrm{HDM}}(\mathbf{F},\mathbf{D})\\
\text{s.t.}\quad
&
\varpi
\left\|
\mathbf{F}\mathbf{D}
\right\|_{\mathrm F}^{2}
\leq
P_{\mathrm T},
\label{eq:PF_HDM_C1}\\
&
\mathbf{F}^{\mathrm H}\mathbf{F}
\preccurlyeq
\mathbf{I}_{M},
\label{eq:PF_HDM_C2}\\
&
R_k^{\mathrm{HDM}}(\mathbf{F},\mathbf{D})
\geq
r,\quad
\forall k\in\mathcal K .
\label{eq:PF_HDM_C3}
\end{align}
\end{subequations}

The optimization problems $(\mathcal{P}_1)$, $(\mathcal{P}_2)$, and $(\mathcal{P}_3)$ are challenging to solve directly due to the nonconvexity of the objective functions and the rate constraints, the fractional form of the EE objective functions, and the coupling among the optimization variables in $(\mathcal{P}_3)$. In the following section, we develop SCA-based algorithms to tackle these problems. 

\vspace{-0.3cm}

\section{Proposed Solution} \label{sec:sol}

This section reformulates $(\mathcal{P}_1)$ in reduced dimension, develops its SCA-based solution, and extends the approach to $(\mathcal{P}_2)$ and $(\mathcal{P}_3)$. It also discusses convergence and computational complexity.

\vspace{-0.2cm}
\subsection{Reduced-Dimension Reformulation of Problem $(\mathcal{P}_1)$ } \label{sec:DR}

To reduce the computational complexity of $(\mathcal{P}_1)$, we exploit the fact that $\mathbf{W}$ interacts with the channels only through $\mathbf{h}_k^{\mathrm{H}}\mathbf{w}_i$. Thus, without loss of optimality, $\mathbf{W}$ lies in $\mathrm{span}(\mathbf{H})$, where $\mathbf{H}=[\mathbf{h}_1,\ldots,\mathbf{h}_K]$ and $V=\mathrm{rank}(\mathbf{H})\leq K$. Let $\mathbf{U}\in\mathbb{C}^{N\times V}$ be an orthonormal basis of $\mathrm{span}(\mathbf{H})$. We can then write $\mathbf{W}=\mathbf{U}\mathbf{Q},$ where $\mathbf{Q}\in\mathbb{C}^{V\times K}$ is the reduced-dimension optimization variable.
\footnote{The basis $\mathbf{U}$ can be obtained from the reduced SVD of $\mathbf{H}$ or its QR decomposition.}

Defining the effective channels as $\tilde{\mathbf{h}}_k=\mathbf{U}^{\mathrm{H}}\mathbf{h}_k\in\mathbb{C}^{V\times1}$, the SINR, user rate, sum rate, and EE can be equivalently expressed as functions of the reduced-dimension variable $\mathbf{Q}$ by substituting $\mathbf{W}=\mathbf{U}\mathbf{Q}$ into the corresponding expressions. Also, the received quantization noise power becomes $ \varsigma_k(\mathbf{Q}) = \tilde{\mathbf{h}}_k^{\mathrm{H}} \mathbf{U} \mathbf{Q} \mathbf{Q}^{\mathrm{H}} \mathbf{U}^{\mathrm{H}}  \tilde{\mathbf{h}}_k =\tilde{\mathbf{h}}_k^{\mathrm{H}} \mathbf{Q}\mathbf{Q}^{\mathrm{H}} \tilde{\mathbf{h}}_k$. Moreover,~\eqref{eq:PF_SLM_C2} can be equivalently written as
\begin{equation}
    \mathbf{W}^{\mathrm{H}}\mathbf{W} = \mathbf{Q}^{\mathrm{H}}\mathbf{U}^{\mathrm{H}}\mathbf{U} \mathbf{Q} = \mathbf{Q}^{\mathrm{H}} \mathbf{Q}
    \preccurlyeq
    \text{diag}(p_1,\dots,p_K).
\end{equation}

Thus, the reduced-dimension reformulation of $(\mathcal{P}_1)$ is
\begin{subequations}
\label{eq:RP_SLM}
\begin{align}
(\mathcal{P}_4): \  \underset{\mathbf{Q},\mathbf{P}}{\max}\  & \eta^{\mathrm{SLM}} (\mathbf{Q},\mathbf{P}) \label{eq:RSLM_obj} \\
\text{s.t.} \   & \varpi \sum_{k\in \mathcal{K}} p_k \leq P_{\mathrm{T}}, \label{eq:RSLM_PB} \\
& \mathbf{Q}^{\mathrm{H}}\mathbf{Q}
    \preccurlyeq
    \text{diag}(p_1,\dots,p_K),  \label{eq:RSLM_PSD}\\
& R_k^\mathrm{SLM} (\mathbf{Q}) \geq r \ \forall k \in \mathcal{K} .\label{eq:RSLM_QoS} 
\end{align}
\end{subequations}
This reduces the number of real-valued variables from $2NK+K$ to $2K^2+K$, yielding a substantial reduction when $K\ll N$. For example, with $N=64$ and $K=8$, the optimization dimension decreases from $1032$ to $136$, an $86.8\%$ reduction.

\vspace{-0.4cm}
\subsection{Proposed Solution for $(\mathcal{P}_4)$}
Next, we develop an SCA-based algorithm to find a stationary point of $(\mathcal{P}_4)$. At the $t$-th iteration, we first apply the Dinkelbach's algorithm to the fractional objective function~\eqref{eq:RSLM_obj}, yielding
\begin{equation}
\begin{split}
    \bar{\eta}^\mathrm{SLM}  (\mathbf{Q},\mathbf{P} ; & \mathbf{Q}^{(t-1)},  \mathbf{P}^{(t-1)}) = R_{\mathrm{sum}}^{\mathrm{SLM}} (\mathbf{Q}) \\
    & - \eta^\mathrm{SLM} (\mathbf{Q}^{(t-1)},\mathbf{P}^{(t-1)}) P_{\mathrm{tot}}^{\mathrm{SLM}} (\mathbf{P}) ,
    \end{split}
    \label{eq:DB}
\end{equation}
where $\mathbf{Q}^{(t)}$ and $\mathbf{P}^{(t)}$ are the values of $\mathbf{Q}$ and $\mathbf{P}$ at the $t$-th iteration of the proposed algorithm. We then express $R_{\mathrm{sum}}^{\mathrm{SLM}} (\mathbf{Q})$ as
\begin{equation}
\begin{split}
    & R_{\mathrm{sum}}^{\mathrm{SLM}} (\mathbf{Q})  = \sum_{k\in \mathcal{K}}  \log_2 \bigg(\varpi^2
    \sum_{i\in \mathcal{K}}
    |\tilde{\mathbf{h}}_k^{\mathrm{H}}\mathbf{q}_i|^2
    +
    \varsigma_k(\mathbf{Q})
     \\
    &+
    \sigma_k^2 \bigg) - \log_2 \bigg(
    \varpi^2
    \sum_{i \in \mathcal{K} \setminus \{ k \}}
    |\tilde{\mathbf{h}}_k^{\mathrm{H}}\mathbf{q}_i|^2
    +
    \varsigma_k(\mathbf{Q})
    +
    \sigma_k^2 \bigg).
    \end{split}
    \label{eq:RSLM_SR1}
\end{equation}
Next, we define the slack variables $\boldsymbol{\alpha} \triangleq [\alpha_{1,1}, \dots,\alpha_{K,K} ]^\mathrm{T}$, $\boldsymbol{\beta} \triangleq [\beta_{1}, \dots,\beta_{K} ]^\mathrm{T}$ and $\boldsymbol{\zeta} \triangleq [\zeta_{1}, \dots,\zeta_{K} ]^\mathrm{T}$ such that
\begin{subequations}
\begin{align}
    \alpha_{k,i} & \leq    |\tilde{\mathbf{h}}_k^{\mathrm{H}}\mathbf{q}_i|^2 \ \forall k,i \in \mathcal{K},
    \label{eq:NC1} \\
    \beta_{k} & \leq    \varsigma_k (\mathbf{Q}) \  \forall k \in \mathcal{K} ,
        \label{eq:NC2} \\
        \zeta_k & \geq   \varpi^2
    \sum_{i \in \mathcal{K} \setminus \{k \}}
    |\tilde{\mathbf{h}}_k^{\mathrm{H}}\mathbf{q}_i|^2
    +
    \varsigma_k(\mathbf{Q}) \ \forall k,i \in \mathcal{K}.
            \label{eq:NC3}
    \end{align}
\end{subequations}
Thus, we can lower-bound the sum rate as 
\begin{equation}
\begin{split}
    R_{\mathrm{sum}}^{\mathrm{SLM}} ( \mathbf{Q} )  \geq \sum_{k \in \mathcal{K}} \log_2(\varpi^2
    \sum_{i \in \mathcal{K}} &
    \alpha_{k,i}
    +
    \beta_k
    +
    \sigma_k^2) \\
    & - \log_2 (
    \zeta_k
    +
    \sigma_k^2).
    \end{split}
    \label{eq:RSLM_SR2}
\end{equation}
It can be noted that the first term in the left-hand side in~\eqref{eq:RSLM_SR2} is already concave, while the second term needs to be linearized. To do so, we use the following surrogate function derived from the first-order Taylor expansion:
\begin{equation}
\begin{split}
\tilde{R}_{\mathrm{sum}}^{\mathrm{SLM}} (  \boldsymbol{\alpha} , \boldsymbol{\beta},\boldsymbol{\zeta} ; \boldsymbol{\zeta}^{(t-1)} ) & = \sum_{k \in \mathcal{K}}  \log_2\bigg(\varpi^2
    \sum_{i \in \mathcal{K}}
    \alpha_{k,i}
    + 
    \beta_k
     \\
    & 
    + \sigma_k^2\bigg) - \frac{1}{\ln 2} \ \frac{(\zeta_k - \zeta_k^{(t-1)})}{\zeta_k^{(t-1)} + \sigma^2_k}.
    \end{split}
    \label{eq:conv_expr}
\end{equation}
 The expression~\eqref{eq:conv_expr} is now concave in the optimization variables. Hence, a concave surrogate function on~\eqref{eq:DB} can be obtained as
\begin{equation}
\begin{split}
    &\tilde{\eta}^\mathrm{SLM}  (\mathbf{P}, \boldsymbol{\alpha} , \boldsymbol{\beta},\boldsymbol{\zeta} ; \mathbf{Q}^{(t-1)},  \mathbf{P}^{(t-1)}, \boldsymbol{\zeta}^{(t-1)}) = \tilde{R}_{\mathrm{sum}}^{\mathrm{SLM}} (  \\
    & \boldsymbol{\alpha}, \boldsymbol{\beta} , \boldsymbol{\zeta} ; \boldsymbol{\zeta}^{(t-1)} ) - \eta^\mathrm{SLM} (\mathbf{Q}^{(t-1)},\mathbf{P}^{(t-1)}) P_{\mathrm{tot}}^{\mathrm{SLM}} (\mathbf{P}).
    \end{split}
    \label{eq:DB_SF}
\end{equation}

Next, we note that the new constraint~\eqref{eq:NC3} is convex, while appropriate surrogate functions are needed to tackle the right-hand sides of~\eqref{eq:NC1} and~\eqref{eq:NC2}. Using the first-order Taylor series expansion, we can find surrogate functions to approximate~\eqref{eq:NC1} and~\eqref{eq:NC2} as
\begin{subequations}
\begin{align}
\alpha_{k,i} &\!\leq \!  2\Re\{\mathbf{q}_i^{\mathrm{H}}\tilde{\mathbf{h}}_k\tilde{\mathbf{h}}_k^{\mathrm{H}}\mathbf{q}_i^{(t-1)}\}
\! -\! \left|\tilde{\mathbf{h}}_k^{\mathrm{H}}\mathbf{q}_i^{(t-1)}\right|^2 \! \! \! ,\ \forall i,k\in\mathcal{K} \!, \label{eq:NC4}\\
\beta_k &\leq 2\Re\{\tilde{\mathbf{h}}_k^{\mathrm{H}}\mathbf{Q}\mathbf{Q}^{(t-1)\mathrm{H}}\tilde{\mathbf{h}}_k\}
-\|\mathbf{Q}^{(t-1)\mathrm{H}}\tilde{\mathbf{h}}_k\|_2^2,\ \forall k\in\mathcal{K}, \label{eq:NC5}
\end{align}
\end{subequations}
which are both convex in the optimization variables.

Next, we tackle the constraints~\eqref{eq:RSLM_PB} and~\eqref{eq:RSLM_PSD}. While the former is already convex, the latter can be expressed in a convex linear matrix inequality (LMI) format as
\begin{equation}
\begin{bmatrix}
\text{diag} (p_1,\dots,p_K) & \mathbf{Q}^{\mathrm H}\\
\mathbf{Q} & \mathbf{I}_V
\end{bmatrix}
\succeq \mathbf{0},
\label{eq: SPD_conv}
\end{equation}
which is convex in $\mathbf{Q}$ and $\mathbf{P}$.

Next, we tackle the constraint set~\eqref{eq:RSLM_QoS} by first expressing the $k$-th user's rate constraint as
\begin{equation}
     |\tilde{\mathbf{h}}_k^{\mathrm{H}}\mathbf{q}_k|^2 \! \geq \! (2^r-1)\bigg(\varpi^2
    \! \! \! \! \sum_{i\in \mathcal{K} \setminus \{k\}}
    |\tilde{\mathbf{h}}_k^{\mathrm{H}}\mathbf{q}_i|^2
    +
    \varsigma_k(\mathbf{Q})
    +
    \sigma_k^2\bigg).
    \label{eq:QoS_for}
\end{equation}
We can observe that the right-hand side of~\eqref{eq:QoS_for} is convex, and the left-hand side can be tackled using~\eqref{eq:NC1}, leading to
\begin{equation}
    \alpha_{k,k} \geq (2^r-1)\bigg(\varpi^2
    \sum_{i\in \mathcal{K} \setminus \{k\}}
    |\tilde{\mathbf{h}}_k^{\mathrm{H}}\mathbf{q}_i|^2
    +
    \varsigma_k(\mathbf{Q})
    +
    \sigma_k^2\bigg) \ \forall k \in \mathcal{K}.
    \label{eq:NC6}
\end{equation}

Thus, the convex SCA subproblem at the $t$-th iteration is given by
\begin{subequations}
\label{eq:SCA_SLM}
\begin{align}  \underset{\mathbf{Q},\mathbf{P}, \boldsymbol{\alpha}, \boldsymbol{\beta}, \boldsymbol{\zeta}}{\max}\  & \tilde{\eta}^\mathrm{SLM}  (\mathbf{P},  \boldsymbol{\alpha} , \boldsymbol{\beta}, \boldsymbol{\zeta} ; \mathbf{Q}^{(t-1)},  \mathbf{P}^{(t-1)}, \boldsymbol{\zeta}^{(t-1)}) \label{eq:SOF} \\
\text{s.t.} \   & \eqref{eq:RSLM_PB}, \eqref{eq:NC3}, \eqref{eq:NC4}, \eqref{eq:NC5}, \eqref{eq: SPD_conv}, \eqref{eq:NC6}. 
\end{align}
\end{subequations}
This problem is convex and can be solved using the numerical optimization algorithms, e.g., via the CVX toolbox~\cite{2014_Grant}.
\vspace{-0.3cm}
\textbf{Algorithm~\ref{alg:SLM}} summarizes the proposed algorithm.

\begin{algorithm}[t]
\small
\caption{EE maximization in the SLM system.}
\label{alg:SLM}

\KwIn{$\mathbf{U}$, $\mathbf{Q}^{(0)}$, $\mathbf{P}^{(0)}$, 
$\zeta_k^{(0)} = \varpi^2
\sum_{i \in \mathcal{K} \setminus k}
|\tilde{\mathbf{h}}_k^{\mathrm{H}}\mathbf{q}^{(0)}_i|^2
+
\varsigma_k(\mathbf{Q}^{(0)})$ $\forall k,i \in \mathcal{K}$,
$t=0$, $\epsilon \geq 0$}

\Repeat{$
\eta_\mathrm{SLM}(\mathbf{Q}^{(t)},\mathbf{P}^{(t)})
-
\eta_\mathrm{SLM}(\mathbf{Q}^{(t-1)},\mathbf{P}^{(t-1)})
\leq \epsilon$}{
    Update $t \leftarrow t+1$\;

    Update $\mathbf{Q}^{(t)}$ and $\mathbf{P}^{(t)}$ by solving~\eqref{eq:SCA_SLM}\;
}

\KwOut{$\mathbf{W} = \mathbf{U}\mathbf{Q}^{(t)}$, $\mathbf{P} = \mathbf{P}^{(t)}$}

\end{algorithm}

\subsection{Proposed Solutions for $(\mathcal{P}_2)$ and $(\mathcal{P}_3)$} \label{sec:pro_TH}
\begin{table}[t]
\vspace{-0.35cm}
\centering
\caption{The proposed solution procedures for $(\mathcal{P}_1)$--$(\mathcal{P}_3)$. \vspace{-0.35cm}}
\label{tab:alg_comparison}
\setlength{\tabcolsep}{4pt}
\renewcommand{\arraystretch}{1.1}
\begin{tabular}{|l|c|c|c|}
\hline
Optimization problem & $(\mathcal{P}_1)$ & $(\mathcal{P}_2)$ & $(\mathcal{P}_3)$ \\
\hline
Optimization variables
& $\mathbf{Q},\mathbf{P}$
& ${\mathbf{Q}}$
& $\mathbf{Z},\mathbf{D}$ \\
\hline
Optimization strategy
& Joint SCA
& SCA
& Alternating SCA \\
\hline

LMI constraint
& Yes
& No
& Yes \\
\hline

Dimensionality retrieval
& $\mathbf{W}=\mathbf{U}\mathbf{Q}$
& $\mathbf{W}=\mathbf{U}{\mathbf{Q}}$
& $\mathbf{F}=\mathbf{U}\mathbf{Z}$ \\
\hline

Post-processing
& None 
& Via SVD
& None \\
\hline
\end{tabular}
\end{table}

The optimization problem $(\mathcal{P}_2)$ is a simplified version of $(\mathcal{P}_1)$, involving only a single optimization variable (i.e., $\mathbf{W}$) and no LMI constraint. Therefore, its solution can be obtained using a similar SCA framework, after applying the relevant modifications, to that described in Algorithm~\ref{alg:SLM}. The corresponding $\mathbf{F}$, $\boldsymbol{\Xi}$ and $\mathbf{P}$ are then obtained from the SVD of $\mathbf{W}$ as described in~\cite{2026_Zhou}.

On the other hand, due to the coupling between $\mathbf{F}$ and $\mathbf{D}$ in $(\mathcal{P}_3)$, an alternating optimization (AO) procedure is required. We first apply a dimensionality reduction procedure similar to that described in Section~\ref{sec:DR} by expressing $\mathbf{F} = \mathbf{U}\mathbf{Z}$. The algorithm then alternates between optimizing $\mathbf{D}$ and $\mathbf{Z}$ until convergence. Table~\ref{tab:alg_comparison} compares the proposed solution procedures for the three optimization problems.

\vspace{-0.4cm}

\subsection{Convergence Analysis}
Let $\mathcal{F}^{\mathrm{SLM}}
\triangleq
\{
(\mathbf{Q},\mathbf{P})
\mid
\eqref{eq:RSLM_PB}\text{--}\eqref{eq:RSLM_QoS}
\text{ are satisfied} \}$, denote the feasible set of $(\mathcal{P}_4)$. The transmit power, LMI, minimum-rate, and non-negativity constraints imply that $\mathcal{F}^{\mathrm{SLM}}$ is closed and bounded. Since $P_{\mathrm{tot}}^{\mathrm{SLM}}(\mathbf{P})\geq P_{\mathrm{cir}}^{\mathrm{SLM}}>0$, the EE is also upper bounded.

Algorithm~\ref{alg:SLM} employs the surrogate rate function $\tilde{R}_{\mathrm{sum}}^{\mathrm{SLM}}$. After restoring the omitted constant term, the resulting surrogate $\bar{R}_{\mathrm{sum}}^{\mathrm{SLM}}$ has the same maximizer as \eqref{eq:SCA_SLM} and satisfies the standard SCA conditions: it is tight at the current iterate, globally lower bounds the original sum rate, and has the same first-order derivative at the expansion point. Hence, the surrogate provides a valid approximation of the Dinkelbach-parametrized objective. Since the previous iterate remains feasible for the SCA subproblem, the resulting EE sequence satisfies $\eta^{\mathrm{SLM}}
\left(
\mathbf{Q}^{(t)},\mathbf{P}^{(t)}
\right)
\geq
\eta^{\mathrm{SLM}}
\left(
\mathbf{Q}^{(t-1)},\mathbf{P}^{(t-1)}
\right).$ Thus, the EE sequence is monotonically non-decreasing and, being upper bounded, converges to a finite limit. Under the standard regularity conditions for SCA and Dinkelbach-based fractional optimization, every accumulation point satisfies the Karush–Kuhn–Tucker (KKT) conditions of $(\mathcal{P}_4)$, owing to the tightness, lower-bound, and first-order consistency of the surrogate together with Dinkelbach's theorem, and hence also of the original problem $(\mathcal{P}_1)$.

The same argument applies to the TLM problem $(\mathcal{P}_2)$, which is solved using the same SCA-Dinkelbach framework. For HDM, the proposed AO algorithm alternately optimizes the MiLAC beamformer $\mathbf{Z}$ and digital beamformer $\mathbf{D}$ using the same SCA procedure. Each block update monotonically improves or preserves the EE. Since the EE is upper bounded, the AO sequence converges to a finite limit. Furthermore, as each block update converges to a stationary point of its corresponding subproblem, the overall algorithm converges to a block-wise stationary point of $(\mathcal{P}_3)$.

\vspace{-0.3cm}
\subsection{Computational Complexity Analysis}
\label{sec:complexity}

This subsection analyzes the computational complexity of the proposed EE optimization algorithms for the SLM, TLM, and HDM architectures. In all cases, the dominant computational cost comes from solving the convex subproblems generated by the SCA framework using a primal-dual interior-point method.

\subsubsection{SLM}

The proposed SLM algorithm first constructs the orthonormal basis $\mathbf{U}$ from the channel matrix using SVD or QR decomposition, which requires $\mathcal{O}(NK^2)$ operations. At each SCA iteration, the reduced-dimensional problem~\eqref{eq:SCA_SLM} is formulated as an SDP with $3K^2+3K$ real variables and an LMI of size $2K\times2K$. Hence, the overall complexity is $\mathcal{O}\bigg( 
NK^2+I_{\mathrm{SLM}}\sqrt{2K}
\Big((3K^2+3K)^3  +4K^2(3K^2+3K)^2
\Big.
\Big.
+(3K^2+3K)8K^3\Big)
\bigg),$ where $I_{\mathrm{SLM}}$ denotes the number of SCA iterations.

\subsubsection{TLM}

For the TLM architecture, the reduced-dimensional optimization problem involves only the beamforming matrix $\mathbf{W}$ and does not contain semidefinite constraints. The reduced-dimensional beamforming variable has $2K^2$ real variables, resulting in a per-iteration complexity of $\mathcal{O}(\sqrt{K}(2K^2)^3)$. After convergence, recovering the two MiLAC beamformers and the power allocation matrix from $\mathbf{W}$ using SVD introduces an additional complexity of $\mathcal{O}(NK^2)$. Therefore, the overall complexity is $\mathcal{O}\left(
I_{\mathrm{TLM}}\sqrt{K}(2K^2)^3
+NK^2
\right)$, where $I_{\mathrm{TLM}}$ denotes the number of SCA iterations.

\subsubsection{HDM}

The proposed HDM algorithm employs an AO framework that alternately optimizes the MiLAC and digital beamformers. The reduced-dimensional MiLAC beamformer has $2KM$ real variables and its optimization is an SDP with an LMI of size $2K\times2K$, while the digital beamformer update has $2MK$ real variables. Thus, the overall complexity is $\mathcal{O}\bigg(
NK^2+I_{\mathrm{HDM}}\bigg[
\sqrt{2K}\Big((2KM)^3
+4K^2(2KM)^2
\Big.
\Big.
+(2KM)8K^3\Big)
+\sqrt{K}(2MK)^3
\bigg]\bigg),$ where $I_{\mathrm{HDM}}$ denotes the number of AO iterations.

The proposed algorithms have computational complexity that depends primarily on the reduced dimensions $K$ and $M$ rather than on the number of antennas $N$. Since practical large-scale MIMO systems typically satisfy $K\ll N$ and $M\ll N$, the proposed dimensionality reduction significantly improves scalability while preserving the optimality of the reduced-dimensional formulations.

\vspace{-0.3cm}
\section{Asymptotic Analysis of the Energy Efficiency} \label{sec:Asy_ana}

In this section, we characterize the EE scaling of the SLM, TLM, HDM, and DBF architectures in the large-antenna regime. We first develop a low-complexity search-based solution for SLM and characterize its EE scaling as the number of transmit antennas grows. We then extend the analysis to TLM, HDM, and DBF.

\vspace{-0.3cm}

\subsection{Low-Complexity Search-Based Solution for SLM }

We assume that the users' channels become asymptotically orthogonal, i.e., $\mathbf{h}_k^\mathrm{H}\mathbf{h}_i \approx 0$ for all $i,k\in\mathcal{K}$ with $i\neq k$, which is commonly adopted in massive MIMO analysis due to the favorable propagation property~\cite{2014_Rusek}. This assumption is particularly relevant for MiLAC-based beamforming, whose objective is to reduce the hardware complexity and power consumption of large-scale antenna arrays. Moreover, we assume high-resolution DACs at the BS, such that quantization noise is negligible. Accordingly, we set $\varpi=1$, which is achieved with high accuracy for $b\geq 3$~\cite{2019_Dai}.

Under the aforementioned assumptions, the inter-user interference is completely eliminated, and the SLM beamforming design decouples into independent single-user beamforming problems. Therefore, the optimal beamforming direction for each user is the one that maximizes the received signal power, i.e., $\max_{\|\mathbf{w}_k\|_2^2=\xi_k} |\mathbf{h}_k^\mathrm{H}\mathbf{w}_k|^2 .$ The maximum is achieved when $\mathbf{w}_k$ is aligned with $\mathbf{h}_k$. Hence, the optimal SLM beamforming vector for the $k$-th user is given by maximum ratio transmission (MRT) as
\begin{equation}
\mathbf{w}_k^{\mathrm{SLM}} = \sqrt{\xi_k} \frac{\mathbf{h}_k}{\lVert \mathbf{h}_k \rVert_2},
\end{equation}
where $\xi_k \geq 0$ is the power allocation coefficient that controls the transmit power assigned to the $k$-th user.

As a result, the sum rate, total power consumption, and EE can be expressed as
\begin{equation}
    R^{\mathrm{SLM}}_{\mathrm{sum}} (\boldsymbol{\xi}) =
    \sum_{k \in \mathcal{K}} \log_2 \left(1+\xi_k\omega_k\right),
\end{equation}
\begin{equation}
    P^{\mathrm{SLM}}_{\mathrm{tot}} (\boldsymbol{\xi}) =
    \frac{1}{\rho}\sum_{k\in\mathcal{K}}\xi_k+P^{\mathrm{SLM}}_{\mathrm{cir}},
\end{equation}
\begin{equation}
    \eta^{\mathrm{SLM}}(\boldsymbol{\xi}) =
    \frac{R^{\mathrm{SLM}}_{\mathrm{sum}}(\boldsymbol{\xi})}
    {P^{\mathrm{SLM}}_{\mathrm{tot}}(\boldsymbol{\xi})},
\end{equation}
where $\boldsymbol{\xi}\triangleq[\xi_1,\dots,\xi_K]^{\mathrm{T}}$ and
$\omega_k\triangleq \|\mathbf{h}_k\|_2^2/\sigma^2$.

Under the orthogonal channel assumption, constraint~\eqref{eq:PF_SLM_C2} is automatically satisfied. Therefore, the problem reduces to optimizing the power allocation coefficients. To characterize the optimal power allocation, we distinguish between users whose minimum rate constraint is active and those receiving additional rate beyond the minimum requirement. So, we introduce a nonnegative rate margin $\bar{r}_k\geq0$ such that
\begin{equation}
    R_k^\mathrm{SLM}(\boldsymbol{\xi})=r+\bar r_k,
    \label{eq:new_rate}
\end{equation}
the required power allocation for user $k$ is given by
\begin{equation}
    \xi_k=\frac{2^{r+\bar r_k}-1}{\omega_k}.
\end{equation}
Consequently, $(\mathcal{P}_1)$ can be equivalently reformulated as
\begin{subequations}
\label{eq:PA_SLM_eq}
\begin{align}
  \underset{\bar{\mathbf r}}{\max}\quad&
  \frac{Kr+\sum_{k\in\mathcal K}\bar r_k}
  {\frac{1}{\rho}\sum_{k \in \mathcal{K}} \left(\frac{2^{r+\bar r_k}-1}{\omega_k}\right) +P_\mathrm{cir}^\mathrm{SLM}}\} \label{eq:obj_lemma1} \\
\text{s.t.}\quad&
\sum_{k\in\mathcal K} \frac{2^{r+\bar r_k}-1}{\omega_k} \leq P_\mathrm{T}, \label{eq:PBC_lemma1}  \\
&\bar{\mathbf r}\geq \mathbf{0}_{K \times 1},
\end{align}
\end{subequations}
where $\bar{\mathbf r}=[\bar r_1,\dots,\bar r_K]^\mathrm{T}$. We then introduce the following lemma.

\begin{lemma}\label{lemma:1}
Let $\mathcal{K}'\subseteq\mathcal{K}$ denote the nonzero rate-margin set containing the $K'\leq K$ users satisfying $\bar{r}_k>0$, where $|\mathcal{K}'|=K'$. The optimal EE is then given by
\begin{equation}
\eta^\mathrm{SLM} = \frac{K'}{A_{\mathcal{K}'}\ln(2)} W\left( \frac{A_{\mathcal{K}'}\ln(2)}{K'} \exp\left( \frac{B_{\mathcal{K}'}\ln(2)}{K'} - 1 \right) \right),
\label{eq:EE_lm1}
\end{equation}
where $W(\cdot)$ denotes the Lambert-$W$ function, and
\begin{subequations}
\begin{align}
A_{\mathcal{K}'} & \triangleq P_{\mathrm{cir}}^{\mathrm{SLM}} + \frac{1}{\rho} \sum_{k\notin\mathcal{K}'} \frac{2^r-1}{\omega_k} - \frac{1}{\rho} \sum_{k\in\mathcal{K}'} \frac{1}{\omega_k}, \\
B_{\mathcal{K}'} & \triangleq (K-K')r + \sum_{k\in\mathcal{K}'} \log_2(\omega_k) + K'\log_2\left(\frac{\rho}{\ln(2)}\right).
\end{align}
\end{subequations}
The corresponding optimal rate margins are
\begin{equation}
\bar r_k =
\begin{cases}
\log_2 \left( \dfrac{\rho\omega_k}{\lambda\ln(2)} \right)-r, & k\in\mathcal{K}',\\[1ex]
0, & k\notin\mathcal{K}'.
\end{cases}
\end{equation}

The obtained solution is feasible if and only if
\begin{equation}
\sum_{k\in\mathcal{K}}\frac{2^{\bar r_k}}{\omega_k}
\leq
2^{-r}\left(
P_{\mathrm{T}}
+
\sum_{k\in\mathcal{K}}\frac{1}{\omega_k}
\right).
\label{eq:FC_lm1}
\end{equation}
For $K'=0$, the expression~\eqref{eq:EE_lm1} is $\lim_{K'\rightarrow0}\eta^{\mathrm{SLM}}=B_{\emptyset}/A_{\emptyset}$,
which reduces to
\begin{equation}
\eta^{\mathrm{SLM}}\big|_{K'=0}
=
\frac{Kr}
{\frac{1}{\rho}\sum_{k\in\mathcal{K}}\frac{2^r-1}{\omega_k}+P_{\mathrm{cir}}^{\mathrm{SLM}}},
\label{eq:EE_K0}
\end{equation}
i.e., all users are served at exactly their minimum rate $r$.

\end{lemma}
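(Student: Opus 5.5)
Our plan is to treat \eqref{eq:PA_SLM_eq} as a single-ratio fractional program and solve it with Dinkelbach's parametrization together with the KKT conditions, assuming first that the power budget \eqref{eq:PBC_lemma1} is inactive. The objective has numerator $Kr+\sum_k\bar r_k$, which is linear, and denominator $\frac1\rho\sum_k(2^{r+\bar r_k}-1)/\omega_k+P_{\mathrm{cir}}^{\mathrm{SLM}}$, which is convex and positive. The ratio is therefore pseudo-concave, and any KKT point is globally optimal. Writing $\lambda$ for the optimal EE, the optimal $\bar{\mathbf r}$ maximizes
\[
Kr+\sum_k\bar r_k-\lambda\Big(\tfrac1\rho\sum_k\tfrac{2^{r+\bar r_k}-1}{\omega_k}+P_{\mathrm{cir}}^{\mathrm{SLM}}\Big)
\]
over $\bar{\mathbf r}\geq\mathbf 0$. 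This separable concave problem has a water-filling type solution. Stationarity gives $1=\lambda\ln(2)2^{r+\bar r_k}/(\rho\omega_k)$ whenever $\bar r_k>0$, i.e., $\bar r_k=\log_2(\rho\omega_k/(\lambda\ln 2))-r$ on $\mathcal K'$. Users outside $\mathcal K'$ are clamped at $\bar r_k=0$, which happens when this quantity would be nonpositive. This yields the stated rate margins for a given support $\mathcal K'$.

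Next we determine $\lambda$ self-consistently from the condition that it equals the EE at the optimum. On $\mathcal K'$ we have $2^{r+\bar r_k}/\omega_k=\rho/(\lambda\ln2)$, so the denominator becomes $A_{\mathcal K'}+K'/(\lambda\ln 2)$, with $A_{\mathcal K'}$ as defined: the fixed users contribute $(2^r-1)/\omega_k$ and the active users contribute the $-1/\omega_k$ terms. The numerator becomes
\[
Kr+\sum_{k\in\mathcal K'}\big(\log_2(\rho\omega_k/\ln2)-\log_2\lambda-r\big)=B_{\mathcal K'}-K'\log_2\lambda.
\]
The fixed-point equation $\lambda\big(A_{\mathcal K'}+K'/(\lambda\ln2)\big)=B_{\mathcal K'}-K'\log_2\lambda$ simplifies to $\lambda A_{\mathcal K'}\ln 2/K'+\ln\lambda=B_{\mathcal K'}\ln2/K'-1$. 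Setting $u=\lambda A_{\mathcal K'}\ln2/K'$, this becomes $u e^{u}=\frac{A_{\mathcal K'}\ln2}{K'}\exp\!\big(\frac{B_{\mathcal K'}\ln2}{K'}-1\big)$. Hence $u=W(\cdot)$, which is \eqref{eq:EE_lm1}.

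We expect this inversion to be the main technical step. In particular, we need the principal branch to be the right one. We would check this from positivity of $\lambda$ and from the sign of $A_{\mathcal K'}$; if $A_{\mathcal K'}$ can be negative, a branch argument on $W_0$ versus $W_{-1}$ is needed.

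For feasibility, we substitute the obtained $\bar r_k$ into \eqref{eq:PBC_lemma1}: $\sum_k(2^{r+\bar r_k}-1)/\omega_k\le P_{\mathrm T}$. Rearranging gives exactly \eqref{eq:FC_lm1}. If this fails, the unconstrained stationary point is not admissible, so the solution is valid if and only if \eqref{eq:FC_lm1} holds. Finally, for $K'=0$ we take the limit of \eqref{eq:EE_lm1} as $K'\to0$. Since $W(x)\sim x$ for small $x$ is not directly applicable here (the exponent blows up), we instead go back to the fixed-point equation: with $K'=0$ it reads $\lambda A_\emptyset=B_\emptyset$. Equivalently, we can substitute $\bar{\mathbf r}=\mathbf 0$ directly into \eqref{eq:obj_lemma1}, which gives \eqref{eq:EE_K0}. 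The optimal set $\mathcal K'$ is then the one among these candidates that yields the largest feasible EE, where the candidates are consistent with the ordering of $\omega_k$ (users with larger $\omega_k$ enter $\mathcal K'$ first).
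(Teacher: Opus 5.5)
Your proposal is correct and follows essentially the same route as the paper's proof: Dinkelbach parametrization, stationarity on $\mathcal{K}'$, substitution giving $B_{\mathcal{K}'}-K'\log_2\lambda$ and $A_{\mathcal{K}'}+K'/(\lambda\ln 2)$, Lambert-$W$ inversion of the fixed point $\lambda=R_{\mathrm{sum}}^{\mathrm{SLM}}/P_{\mathrm{tot}}^{\mathrm{SLM}}$, and feasibility by substitution into the power budget. The branch question you left open is settled by positivity of the total power: $P_{\mathrm{tot}}^{\mathrm{SLM}}=A_{\mathcal{K}'}+K'/(\lambda\ln 2)>0$ forces $u=\lambda A_{\mathcal{K}'}\ln(2)/K'>-1$, which selects the principal branch $W_0$ even when $A_{\mathcal{K}'}<0$.
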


\begin{proof}
By applying the Dinkelbach transformation to the objective function~\eqref{eq:obj_lemma1}, the fractional objective can be equivalently solved by maximizing the following Lagrangian function:
\begin{equation}
\mathcal{L} (\bar{\mathbf{r}},\lambda) =   Kr+\sum_{k\in\mathcal{K}'}\bar r_k - \lambda \left( \frac{1}{\rho} \sum_{k\in\mathcal{K}} \frac{2^{r+\bar r_k}-1}{\omega_k} + P_{\mathrm{cir}}^{\mathrm{SLM}} \right).
\end{equation}
For each user in the set $\mathcal{K}'$, the stationarity condition is
\begin{equation}
 1 - \lambda \frac{\ln(2)2^{r+\bar r_k}}{\rho\omega_k}=0 
\Rightarrow \bar r_k = \log_2 \left( \frac{\rho\omega_k}{\lambda\ln(2)} \right)-r.
\end{equation}
Substituting the above expression into the sum rate expression yields
\begin{equation}
\begin{aligned}
& R_\mathrm{sum}^{\mathrm{SLM}} (\lambda) = Kr+ \sum_{k\in\mathcal{K}'} \left( \log_2 \left( \frac{\rho\omega_k}{\lambda\ln(2)} \right)-r \right) \\
&= (K-K')r + \sum_{k\in\mathcal{K}'} \log_2(\omega_k) + K'\log_2 \left( \frac{\rho}{\lambda\ln(2)} \right) \\
&= B_{\mathcal{K}'} - K'\log_2(\lambda).
\end{aligned}
\end{equation}
Similarly, the total power consumption becomes
\begin{equation}
\begin{split}
P_\mathrm{tot}^{\mathrm{SLM}} (\lambda)  &= \frac{1}{\rho} \sum_{k\in\mathcal{K}'} \frac{\frac{\rho\omega_k}{\lambda\ln(2)}-1}{\omega_k} + \frac{1}{\rho} \sum_{k\notin\mathcal{K}'} \frac{2^r-1}{\omega_k} \\
& + P_{\mathrm{cir}}^{\mathrm{SLM}}  = \frac{K'}{\lambda\ln(2)} + A_{\mathcal{K}'}.
\end{split}
\end{equation}

Moreover, the optimal Dinkelbach parameter should satisfy $\lambda^{\mathrm{opt}} =R_\mathrm{sum}^{\mathrm{SLM}} (\lambda^{\mathrm{opt}}) /P_\mathrm{tot}^{\mathrm{SLM}} (\lambda^{\mathrm{opt}}) = \eta^\mathrm{SLM}$ . Hence,
\begin{equation}
\begin{split}
&\lambda^{\mathrm{opt}} \left( \frac{K'}{\lambda^{\mathrm{opt}} \ln(2)} + A_{\mathcal{K}'} \right) = B_{\mathcal{K}'} - K'\log_2(\lambda^{\mathrm{opt}} ) \\
& \Rightarrow
\lambda^{\mathrm{opt}}  = \frac{K'}{A_{\mathcal{K}'}\ln(2)} W\Bigg( \frac{A_{\mathcal{K}'}\ln(2)}{K'} \\
& \times \exp\left( \frac{B_{\mathcal{K}'}\ln(2)}{K'} - 1 \right) \Bigg) = \eta^{\mathrm{SLM}}.
\end{split}
\end{equation}
Finally, the feasibility of the solution follows directly by substituting the optimal rate margins into the transmit power constraint~\eqref{eq:PBC_lemma1}, which completes the proof.
\end{proof}

To determine the set $\mathcal{K}'$ in Lemma~\ref{lemma:1}, we employ a descending iterative search strategy. First, the users are sorted according to their effective channel gains in descending order, i.e., $\omega_1 \geq \omega_2 \geq \cdots \geq \omega_K$. The search is initialized with the largest possible rate-margin set, $\mathcal{K}'=\mathcal{K}$ (i.e., $K'=K$), where all users are assumed to have positive rate margins. For this candidate set, the corresponding EE is computed using Lemma~\ref{lemma:1}, and the associated transmit power feasibility condition is verified. If the candidate set satisfies the feasibility condition, it is retained for comparison. The rate-margin set is then updated by removing the user with the weakest effective channel gain, and the same procedure is repeated until the final case of $K'=0$ is reached. Among all feasible candidate sets, the set achieving the maximum EE is selected as the optimal solution. The complete procedure is summarized in \textbf{Algorithm~\ref{alg:rate_margin_SLM}}.

\begin{algorithm}[t]
\small
\caption{Low-dimensional search-based closed-form approximate solution for the SLM system.}
\label{alg:rate_margin_SLM}

\KwIn{$\{\omega_k\}_{k\in\mathcal{K}}$, $r$, $P_{\mathrm{T}}$, $P_{\mathrm{cir}}^{\mathrm{SLM}}$, $\rho$}

Sort the users such that
$\omega_1 \geq \omega_2 \geq \cdots \geq \omega_K$\;

Initialize $\eta^{\mathrm{opt}}=0$ and $\mathcal{K}^{\mathrm{opt}}=\emptyset$\;

\For{$K'=K$ \KwTo $0$}{

    Set $\mathcal{K}'=\{1,\dots,K'\}$ and compute $A_{\mathcal{K}'}$ and $B_{\mathcal{K}'}$\;

    \eIf{$K'=0$}{
        Compute $\eta^{\mathrm{SLM}}$ using~\eqref{eq:EE_K0}\;
    }{
        Compute $\eta^{\mathrm{SLM}}$ using~\eqref{eq:EE_lm1}\;
    }

    Compute the rate margins $\bar{r}_k$, $\forall k \in \mathcal{K}$\;

    \If{the feasibility condition~\eqref{eq:FC_lm1} is satisfied
        \textbf{and} $\eta^{\mathrm{SLM}}>\eta^{\mathrm{opt}}$}{

        Update $\eta^{\mathrm{opt}}\leftarrow\eta^{\mathrm{SLM}}$ and
        $\mathcal{K}^{\mathrm{opt}}\leftarrow\mathcal{K}'$\;
    }
}

\KwOut{$\eta^{\mathrm{SLM}}=\eta^{\mathrm{opt}}$, $\mathcal{K}'=\mathcal{K}^{\mathrm{opt}}$}

\end{algorithm}

\begin{figure}
  \centering
  \begin{tabular}{c c}
    \hspace{-0.3cm} \includegraphics[width=0.55\columnwidth]{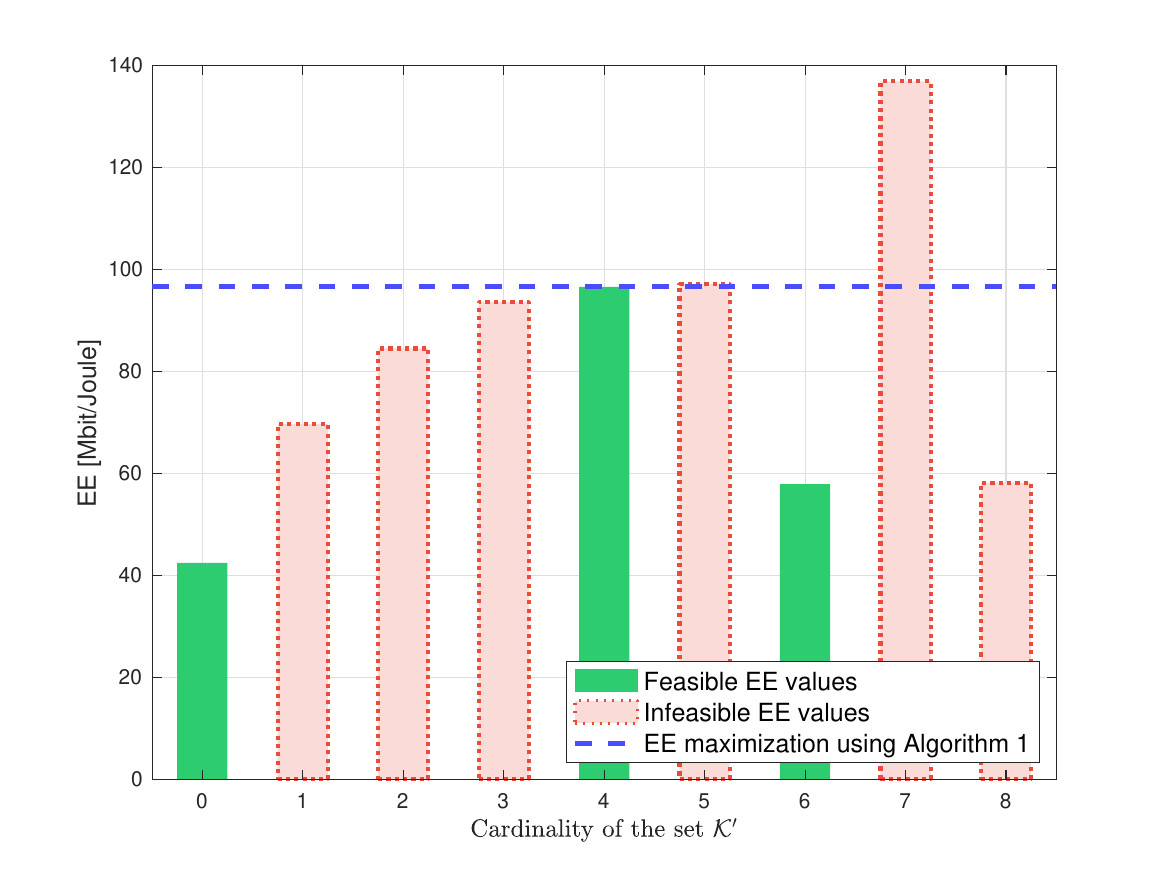} & \hspace{-0.8cm}\includegraphics[width=0.55\columnwidth]{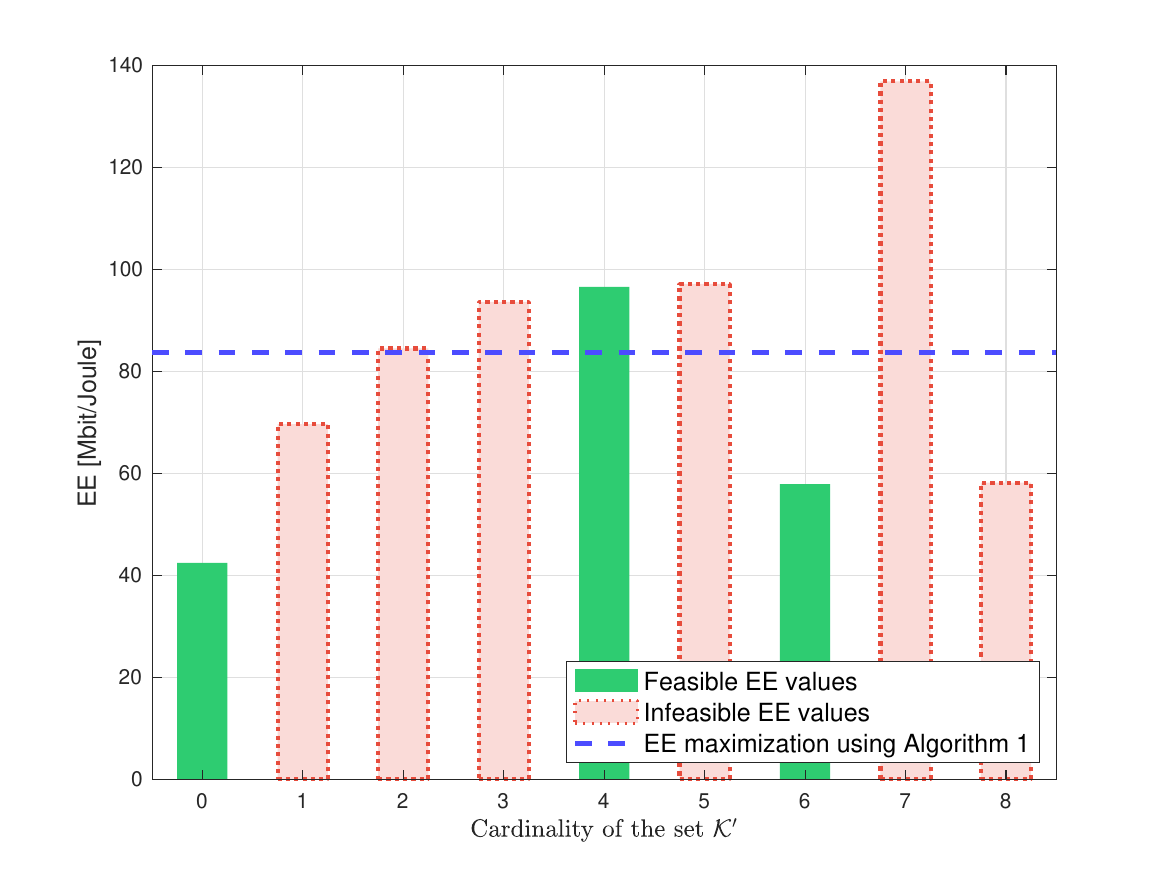} \\
    \hspace{-0.3cm}  \scriptsize (a)  & 
\hspace{-0.8cm} \scriptsize (b) 
  \end{tabular}
    \medskip
    \vspace{-0.6cm}
  \caption{ Outputs of Algorithm~\ref{alg:rate_margin_SLM} and Algorithm~\ref{alg:SLM} for (a) orthogonal channels without quantization noise and (b) non-orthogonal multipath channels with quantization noise.}
  \label{fig:SLM_EE_LN}
\end{figure}

To illustrate the proposed search procedure, Fig.~\ref{fig:SLM_EE_LN} shows the EE obtained for each candidate cardinality $K'$ of $\mathcal{K}'$. Thus, each bar represents a candidate solution with $K'$ users receiving a rate above their minimum requirement, while the remaining $K-K'$ users achieve exactly their minimum rate. Green and red bars indicate feasible and infeasible solutions, respectively, according to~\eqref{eq:FC_lm1}, while the dashed line indicates the EE obtained by Algorithm~\ref{alg:SLM}. For the considered example, the candidate with the highest feasible EE corresponds to $K'=4$. For orthogonal channels without quantization noise, the maximum feasible EE obtained from the candidate search coincides with the EE of Algorithm~\ref{alg:SLM}, confirming that the proposed search achieves the optimal solution under these assumptions. For non-orthogonal channels with quantization noise, the candidate search provides an optimistic EE estimate because it neglects multi-user interference and quantization noise. Nevertheless, it provides useful insight into the achievable EE and the impact of the number of users receiving a rate margin without solving the original non-convex problem.

\vspace{-0.3cm}

\subsection{Asymptotic EE Scaling Analysis for the SLM System} \label{sec:scaling}

To gain further insights into the asymptotic EE behavior of the SLM architecture, we consider the large-antenna regime where the number of transmit antennas \(N\) increases while the nonzero rate-margin set \(\mathcal{K}'\) remains fixed. For a large-scale antenna array channel, the channel hardening property yields
\begin{equation}
\frac{1}{N}\|\mathbf{h}_k\|_2^2
\xrightarrow[N\rightarrow\infty]{\mathrm{a.s.}}
\chi_k,
\end{equation}
where \(\chi_k\) denotes the asymptotic normalized channel power gain~\cite{2016_Marzetta,2013_Ngo}. Accordingly, the effective channel-to-noise ratio can be approximated as $\omega_k
\approx
N\bar{\chi}_k$, where \(\bar{\chi}_k\triangleq\chi_k/\sigma_k^2\). Substituting this approximation into the expressions of \(A_{\mathcal{K}'}\) and \(B_{\mathcal{K}'}\), together with the circuit power model in~\eqref{eq:SLM_p_cir}, yields
\begin{equation}
\begin{aligned}
A_{\mathcal{K}'}
=&\,
\frac{P_{\mathrm{IT}}}{2}N^2
+
\frac{(2K+1)P_{\mathrm{IT}}}{2}N
+
\Gamma_{1}\\
&+
\frac{1}{\rho}
\sum_{k\notin\mathcal{K}'}
\frac{2^r-1}{N\bar{\chi}_k}
-
\frac{1}{\rho}
\sum_{k\in\mathcal{K}'}
\frac{1}{N\bar{\chi}_k},
\end{aligned}
\end{equation}
\begin{equation}
B_{\mathcal{K}'}
=
K'\log_2(N)+\Gamma_2,
\end{equation}
where \(\Gamma_1\) and \(\Gamma_2\)  collect all terms that are independent of \(N\).

As \(N\rightarrow\infty\), the last two terms in \(A_{\mathcal{K}'}\) vanish, while the quadratic term dominates the remaining terms. Hence,
\begin{equation}
A_{\mathcal{K}'}
\approx
\frac{P_{\mathrm{IT}}}{2}N^2.
\end{equation}
Substituting this approximation into~\eqref{eq:EE_lm1} gives
\begin{equation}
\eta^{\mathrm{SLM}}
\approx
\frac{2K'}
{P_{\mathrm{IT}}N^2\ln(2)}
W\!\left(
\Gamma_3 N^3
\right),
\end{equation}
where $\Gamma_3 \triangleq
\frac{P_{\mathrm{IT}}\ln(2)}
{2K'}
\exp\!\left(
\frac{\Gamma_2\ln(2)}
{K'}
-1
\right)$ is independent of \(N\).

Using the asymptotic expansion of the Lambert-$W$ function
$W(x)\approx\ln(x)-\ln(\ln(x))$ for large $x$~\cite{1996_Corless}, we obtain
\begin{equation}
W(\Gamma_3 N^3)
\approx
3\ln(N)-\ln\!\bigl(3\ln(N)\bigr)+\ln(\Gamma_3).
\end{equation}
Therefore
\begin{equation}
\eta^{\mathrm{SLM}}
\approx
\frac{2K'}{P_{\mathrm{IT}}N^2\ln(2)}
\left[
3\ln(N)-\ln\!\bigl(3\ln(N)\bigr)+\ln(\Gamma_3)
\right].
\end{equation}
Hence, the optimal EE of the SLM architecture exhibits the asymptotic scaling
\begin{equation}
\eta^{\mathrm{SLM}}
\propto
\frac{\ln(N)}{N^2},
\end{equation}
as \(N\rightarrow\infty\), with an asymptotic scaling factor of $\frac{6K'}{P_{\mathrm{IT}}\ln(2)}$.

This result indicates that the EE of the SLM architecture eventually decreases at a rate of $\ln(N)/N^2$ as the antenna array size increases. This behavior is primarily attributed to the quadratic scaling of circuit power consumption with $N$, which arises from the increasing number of tunable impedance elements in the considered SLM architecture as demonstrated in~\eqref{eq:SLM_p_cir}.

\vspace{-0.4cm}

\subsection{Optimal Number of Transmit Antennas for SLM} \label{sec:peak}

While Section~\ref{sec:scaling} characterizes the asymptotic decay of the EE as $N\rightarrow\infty$, it does not determine whether the EE has a unique maximum or characterize its value. This is addressed in the following lemma.

\begin{lemma}\label{lemma:2}
Let the nonzero rate-margin set $\mathcal{K}'$ be approximately fixed. The EE admits a unique maximum at $N^{\mathrm{opt}}$ that is obtained by solving the equation
\begin{equation}
W\!\left(\kappa N^{\mathrm{opt}} A_{\mathcal{K}'}(N^{\mathrm{opt}})\right)
=
\frac{A_{\mathcal{K}'}(N^{\mathrm{opt}})}{N^{\mathrm{opt}}\,P_{\mathrm{IT}}\!\left(N^{\mathrm{opt}}+\frac{2K+1}{2}\right)},
\label{eq:Nopt_implicit_full}
\end{equation}
where $\kappa\triangleq\frac{\ln(2)}{K'}\exp\!\left(\frac{\Gamma_2\ln(2)}{K'}-1\right)$. The corresponding peak EE value is
\begin{equation}
\eta^{\mathrm{SLM}}(N^{\mathrm{opt}})\approx \frac{K'}{\ln(2)\,N^{\mathrm{opt}}P_{\mathrm{IT}}\!\left(N^{\mathrm{opt}}+\frac{2K+1}{2}\right)}.
\label{eq:eta_at_peak_full}
\end{equation}
\end{lemma}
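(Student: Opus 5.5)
The plan is to treat $N$ as a continuous variable, write the EE of Lemma~\ref{lemma:1} as an explicit function of $N$ through $A_{\mathcal{K}'}(N)$ and $B_{\mathcal{K}'}(N)$, and find its stationary point by differentiation. Using $B_{\mathcal{K}'}=K'\log_2(N)+\Gamma_2$ from Section~\ref{sec:scaling}, the exponential in \eqref{eq:EE_lm1} becomes $\exp(B_{\mathcal{K}'}\ln(2)/K'-1)=N\exp(\Gamma_2\ln(2)/K'-1)$. Hence the Lambert-$W$ argument is exactly $x(N)\triangleq\kappa N A_{\mathcal{K}'}(N)$ with $\kappa$ as defined in the statement. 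Writing $y(N)\triangleq W(x(N))$, the EE is
\begin{equation*}
\eta^{\mathrm{SLM}}(N)=\frac{K'\,y(N)}{\ln(2)\,A_{\mathcal{K}'}(N)}.
\end{equation*}

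Next I differentiate, using $W'(x)=W(x)/\big(x(1+W(x))\big)$. This gives
\begin{equation*}
y'=\frac{y}{1+y}\left(\frac{1}{N}+\frac{A'}{A}\right),
\end{equation*}
where $A$ abbreviates $A_{\mathcal{K}'}$. Setting $\frac{d}{dN}(y/A)=0$ is equivalent to $y'/y=A'/A$, that is,
\begin{equation*}
\frac{1}{1+y}\left(\frac{1}{N}+\frac{A'}{A}\right)=\frac{A'}{A},
\end{equation*}
which simplifies to $y=A/(NA')$. In $A'$ I keep only the dominant part $A'\approx P_{\mathrm{IT}}\big(N+\frac{2K+1}{2}\big)$ and drop the derivatives of the $\mathcal{O}(1/N)$ terms, which are $\mathcal{O}(1/N^2)$ and negligible. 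This yields \eqref{eq:Nopt_implicit_full}. Substituting $y=A/(NA')$ back into $\eta=K'y/(\ln(2)A)$ cancels $A$ and gives $\eta=K'/(\ln(2)NA')$, which is \eqref{eq:eta_at_peak_full}. The "$\approx$" there reflects the approximation made in $A'$.

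For uniqueness, observe that the sign of $d\eta/dN$ equals the sign of
\begin{equation*}
g(N)\triangleq\frac{A}{NA'}-W(\kappa NA).
\end{equation*}
The second term is strictly increasing, since $NA$ is increasing and $W$ is increasing on the relevant range. For the first term I use the polynomial form $A\approx aN^2+bN+c$ with $a=P_{\mathrm{IT}}/2$, $b=(2K+1)P_{\mathrm{IT}}/2$ and $c=\Gamma_1>0$; here $c$ is essentially the positive static circuit power. A direct computation gives
\begin{equation*}
\frac{d}{dN}\frac{aN^2+bN+c}{2aN^2+bN}=\frac{-abN^2-4acN-bc}{(2aN^2+bN)^2}<0.
\end{equation*}
So $g$ is strictly decreasing.

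Moreover, $g(N)\to+\infty$ as $N\to0^+$, because $c>0$ makes $A/(NA')$ blow up. Also $g(N)\to-\infty$ as $N\to\infty$, because $A/(NA')\to1/2$ while $W(\kappa NA)\to\infty$. Therefore $g$ has exactly one root. The EE increases before this root and decreases after it, so the root is the unique maximizer $N^{\mathrm{opt}}$.

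The main obstacle is justifying the monotonicity step rigorously. The exact $A_{\mathcal{K}'}$ contains the $\pm\frac{1}{\rho}\sum\frac{1}{N\bar\chi_k}$ terms, and $c$ must be positive; both are handled only under the "approximately fixed $\mathcal{K}'$" and large-$N$ approximations. I would first try to absorb these terms into a small perturbation of $c$ that preserves the sign of the numerator $-abN^2-4acN-bc$. Failing that, I would simply state the uniqueness under the quadratic model of $A_{\mathcal{K}'}$.
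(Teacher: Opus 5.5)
Your proposal is correct and follows essentially the same route as the paper. The paper implicitly differentiates the Dinkelbach identity $\eta^{\mathrm{SLM}}P_{\mathrm{tot}}^{\mathrm{SLM}}=R_{\mathrm{sum}}^{\mathrm{SLM}}$ to obtain $\eta^{\mathrm{SLM}}=B_{\mathcal{K}'}'/A_{\mathcal{K}'}'$ first and then equates this with the Lambert-$W$ form. You instead differentiate the Lambert-$W$ form directly to get $W=A/(NA')$ first and then back-substitute; both arguments obtain uniqueness from $W$ increasing against $A/(NA')$ decreasing. Your version is actually more complete than the paper's, which only asserts the following: you check that $A/(NA')$ is strictly decreasing under the quadratic model, you establish existence via the limits, and you show the sign change that makes the stationary point a maximum.
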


\begin{proof}
From Lemma~\ref{lemma:1}, the optimal EE satisfies $\eta^{\mathrm{SLM}}P^{\mathrm{SLM}}_{\mathrm{tot}}=R^{\mathrm{SLM}}_{\mathrm{sum}}$, which upon implicit differentiation in $N$ requires $\partial B_{\mathcal{K}'}/\partial N=\eta^{\mathrm{SLM}}\partial A_{\mathcal{K}'}/\partial N$. Differentiating $A_{\mathcal{K}'}(N)$ gives $\partial A_{\mathcal{K}'}/\partial N=P_{\mathrm{IT}}\!\left(N+\frac{2K+1}{2}\right)$, and with $\partial B_{\mathcal{K}'}/\partial N=K'/(N\ln(2))$, this yields exactly~\eqref{eq:eta_at_peak_full}. Equating this with the exact Lemma~\ref{lemma:1} expression $\eta^{\mathrm{SLM}}=\frac{K'}{A_{\mathcal{K}'}(N)\ln(2)}W(\kappa N A_{\mathcal{K}'}(N))$ yields~\eqref{eq:Nopt_implicit_full}. $W(\cdot)$ is strictly increasing while the right-hand side of~\eqref{eq:Nopt_implicit_full} is strictly decreasing in $N$, so the crossing point is unique.
\end{proof}

\begin{remark}
Equation~\eqref{eq:Nopt_implicit_full} can be solved straightforwardly using a simple one-dimensional search. Since the crossing point is unique, a linear search over $N$ is sufficient to determine $N^{\mathrm{opt}}$.
\end{remark}

\subsection{DBF, TLM and HDM}

A similar asymptotic analysis can be carried out for the TLM, HDM, and DBF architectures by replacing the circuit power model \(P_{\mathrm{cir}}^{\mathrm{SLM}}\) in Lemma~\ref{lemma:1} with \(P_{\mathrm{cir}}^{\mathrm{TLM}}\), \(P_{\mathrm{cir}}^{\mathrm{HDM}}\), and \(P_{\mathrm{cir}}^{\mathrm{DBF}}\), respectively, and subsequently applying Algorithm~\ref{alg:rate_margin_SLM}. This follows from the fact that, under the asymptotic orthogonality of user channels, none of the considered architectures imposes an inherent limitation on the achievable beamforming gain. Furthermore, the TLM architecture can realize any DBF beamformer, while the HDM architecture further generalizes the TLM by replacing the MiLAC front-end with a digital beamforming stage. Consequently, all four architectures achieve the same asymptotic beamforming gain. Nevertheless, their optimal EE differs due to the different circuit power consumption associated with each architecture\footnote{While Lambert-$W$-based EE solutions have been studied for conventional RF-chain-based systems~\cite{2012_Isheden,2014_Huang}, MiLAC architectures introduce different circuit power models and EE optimization structures. Our work further develops a unified framework for SLM, TLM, and HDM, derives low-dimensional closed-form search solutions, and establishes their asymptotic EE scaling to characterize the impact of MiLAC architectures and power consumption on large-scale antenna systems.}.

Following similar derivations as in Section~\ref{sec:scaling}, the asymptotic EE scaling laws can be obtained for the DBF, TLM and HDM architectures as follows:
\begin{equation}
\eta^{\mathrm{DBF}} \propto \frac{\ln(N)}{N},\qquad
\eta^{\mathrm{TLM}},\eta^{\mathrm{HDM}} \propto \frac{\ln(N)}{N^2},
\end{equation}
as \(N\rightarrow\infty\), with scaling factors \(\frac{2K'}{P_{\mathrm{RF}}\ln(2)}\) for the DBF and \(\frac{6K'}{P_{\mathrm{IT}}\ln(2)}\) for both the TLM and HDM, where \(P_{\mathrm{RF}} \triangleq P_\mathrm{H}+2(P_\mathrm{VGA}+P_\mathrm{M}+P_\mathrm{LP}+ P_\mathrm{ADC})\).

These scaling laws indicate that, under the considered power consumption models, the EE of the MiLAC-based architectures (SLM, TLM, and HDM) decreases more rapidly than that of DBF in the asymptotic large-antenna regime. This difference arises from the quadratic growth of the circuit power in the considered MiLAC architectures, which results from the increasing number of tunable impedance elements in the analog network, whereas the DBF circuit power grows linearly with $N$. Importantly, this asymptotic behavior is specific to the adopted architecture and power consumption models and does not imply that MiLAC-based architectures are less energy efficient than DBF for practically relevant antenna sizes. In fact, as shown by our numerical results, MiLAC-based architectures can provide substantially higher EE over practical antenna regimes. The asymptotic scaling should therefore be interpreted as a characterization of the eventual trend for very large $N$, rather than as a direct comparison of the absolute EE of the two architectures.

The characterization in Section~\ref{sec:peak} extends to the other architectures by replacing $A_{\mathcal{K}'}(N)$ with the corresponding circuit power model. For TLM and HDM, the optimal antenna dimension $N^{\mathrm{opt}}$ is obtained by solving
\begin{equation}
W\!\left(\kappa N^{\mathrm{opt}} A_{\mathcal{K}'}(N^{\mathrm{opt}})\right)
=
\frac{A_{\mathcal{K}'}(N^{\mathrm{opt}})}
{N^{\mathrm{opt}}P_{\mathrm{IT}}
\left(N^{\mathrm{opt}}+\frac{2K'+1}{2}\right)}.
\label{eq:Nopt_TLM_HDM}
\end{equation}
The corresponding peak EE is
\begin{equation}
\eta^{\mathrm{opt}}
\approx
\frac{K'}{\ln(2)\,N^{\mathrm{opt}}P_{\mathrm{IT}}
\left(N^{\mathrm{opt}}+\frac{2K'+1}{2}\right)}.
\label{eq:eta_TLM_HDM}
\end{equation}
For DBF, the optimal antenna dimension is instead obtained from
\begin{equation}
W\!\left(\kappa N^{\mathrm{opt}} A_{\mathcal{K}'}(N^{\mathrm{opt}})\right)
=
\frac{A_{\mathcal{K}'}(N^{\mathrm{opt}})}
{N^{\mathrm{opt}}P_{\mathrm{RF}}},
\label{eq:Nopt_DBF}
\end{equation}
with the corresponding peak EE given by
\begin{equation}
\eta^{\mathrm{opt}}
\approx
\frac{K'}{\ln(2)\,N^{\mathrm{opt}}P_{\mathrm{RF}}}.
\label{eq:eta_DBF}
\end{equation}
\vspace{-0.5cm}

\section{Numerical Simulations} \label{sec:sim}
In this section, we provide numerical results to evaluate the performance of the considered architectures and validate the effectiveness of the proposed MiLAC-based designs. Specifically, we compare the three MiLAC architectures, namely SLM, TLM, and HDM, with conventional DBF and HAD systems in terms of their achievable EE.

We use the following parameters unless otherwise stated. The number of transmit antennas and receive users are set to \(N=64\) and \(K=8\), respectively. For the HDM and HAD systems, the number of RF chains is set to \(M=12\). The system carrier frequency, bandwidth, and DAC sampling frequency at the BS are set to \(f_\mathrm{c}=\unit[28]{GHz}\), \(BW=\unit[20]{MHz}\), and \(f_\mathrm{s}=\unit[1]{GHz}\), respectively. The RF chains are assumed to employ DACs with \(b=4\) quantization bits. The static power consumption of each hardware component is set according to Table~\ref{tab:2}, and the power amplifier efficiency is set to \(\rho=0.27\). The transmit power budget and noise power are set to \(P_\mathrm{T}=\unit[35]{dBm}\) and \(\sigma_k^2=\unit[-93]{dBm}\), respectively, for all \(k\in\mathcal{K}\). Finally, the minimum rate requirement is set to \(r=\unit[1]{bit/sec/Hz}\).

The channels are modeled using a Saleh--Valenzuela (SV) multipath mmWave channel model. Specifically, a geometric channel model with \(L=5\) multipath components is considered, where the channel consists of a sum of randomly generated paths with independent complex Gaussian path gains and uniformly distributed angles of departure. The large-scale fading is modeled using a distance-dependent path-loss model with a path-loss exponent of \(3.3\). The user distances are uniformly distributed between \(20\) and \(40\)~m. Finally, Monte Carlo simulations are conducted, and all simulation results are averaged over 100 independent random channel realizations.

\vspace{-0.3cm}
\subsection{Convergence Analysis}
\begin{figure*}[t]
    \centering
    
    \begin{minipage}{.32\textwidth}
        \centering
        \includegraphics[width=\columnwidth]{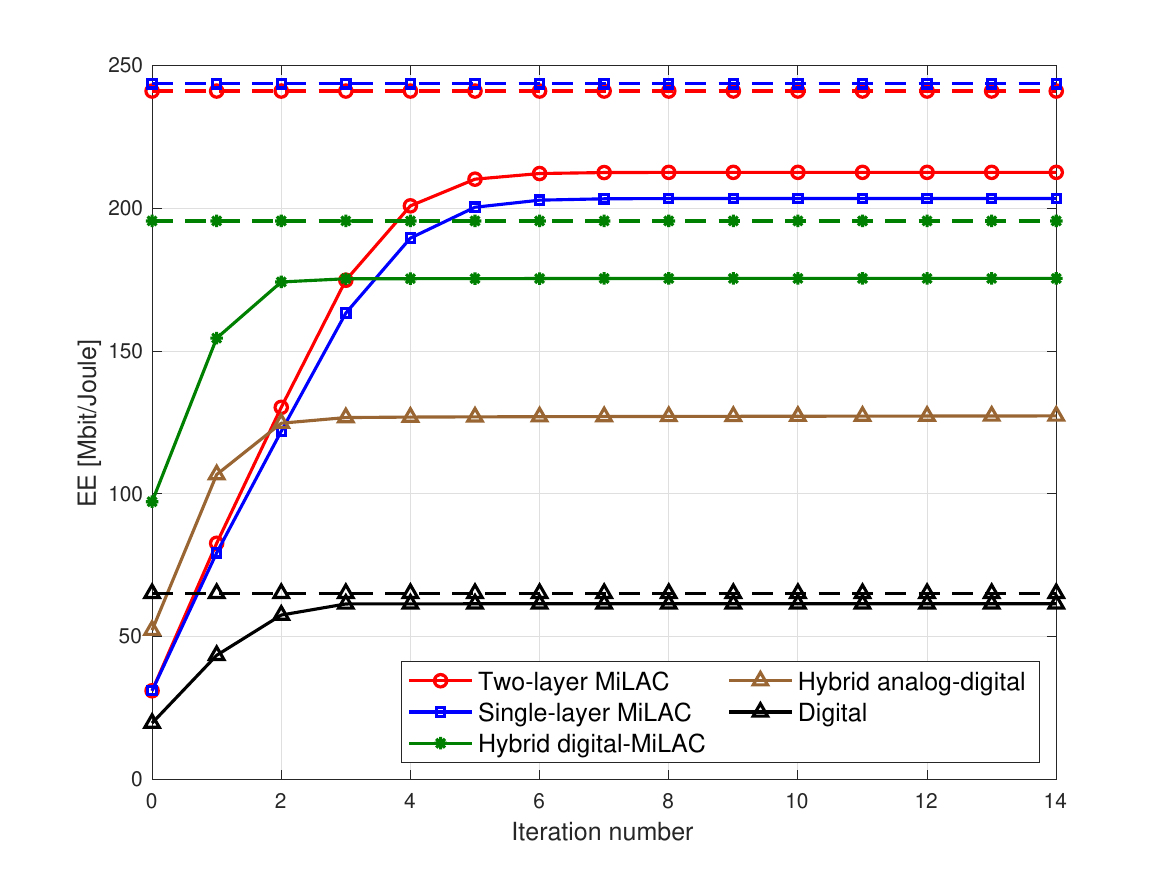}
        \vspace{-0.8cm}
        \caption{EE versus the number of iterations for the proposed algorithm (solid) and the low-dimensional closed-form search (dashed) in Section~\ref{sec:Asy_ana}.}
        \label{fig:conv}
    \end{minipage}%
    \hfill
    \begin{minipage}{.32\textwidth}
        \centering
        \includegraphics[width=\columnwidth]{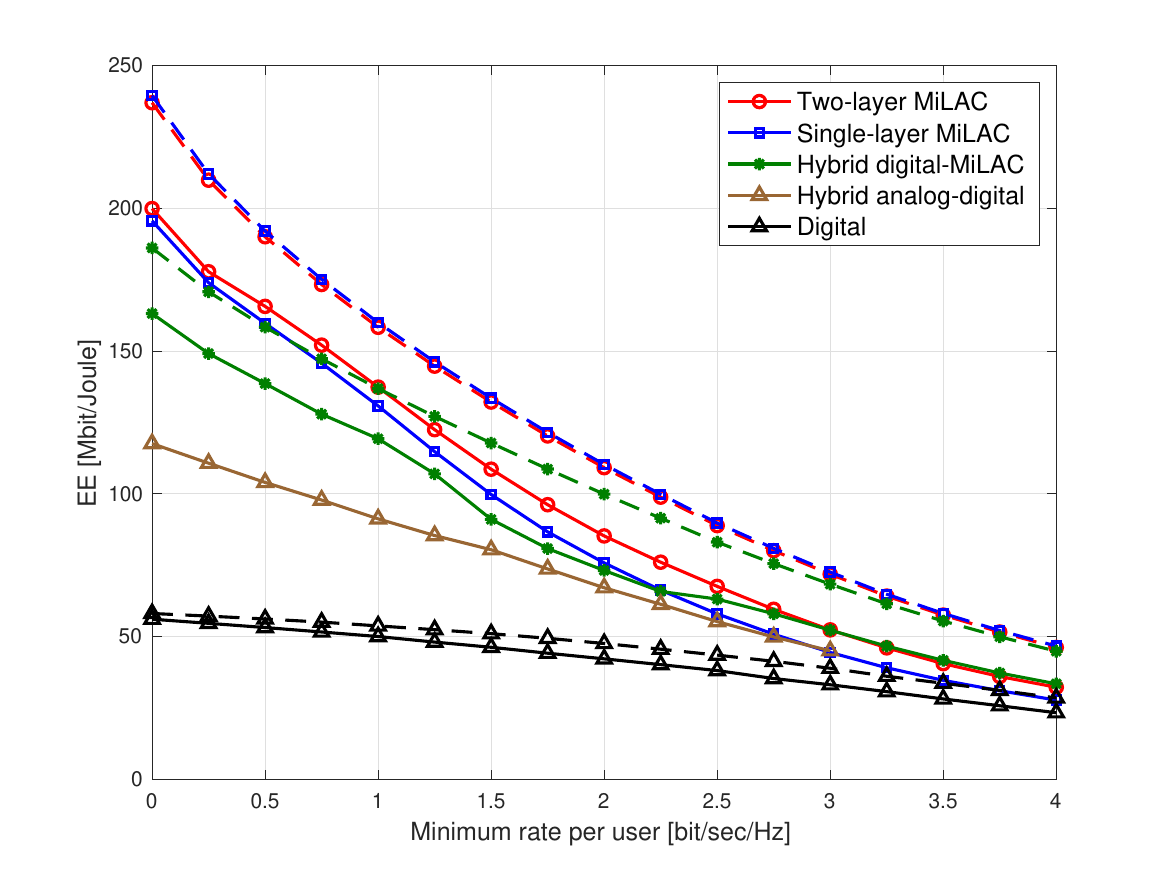}
         \vspace{-0.8cm}
        \caption{EE achieved by the proposed algorithm (solid) and the low-dimensional closed-form search (dashed) as the rate threshold increases.}
        \label{fig:vary_r1}
    \end{minipage}%
    \hfill
    \begin{minipage}{.32\textwidth}
        \centering
        \includegraphics[width=\columnwidth]{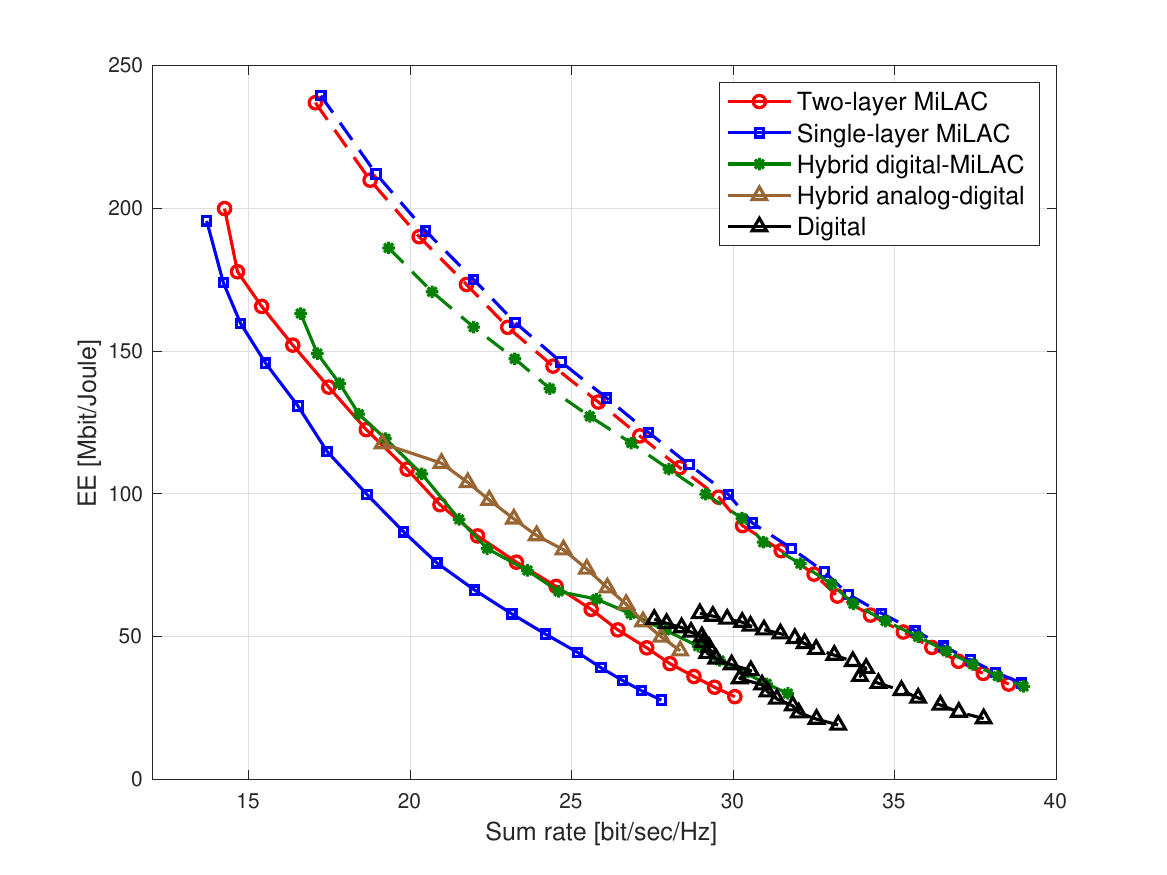}
         \vspace{-0.8cm}
        \caption{EE versus the sum rate achieved by the proposed algorithm (solid) and the low-dimensional closed-form search (dashed) as the rate threshold increases.}
        \label{fig:vary_r3}
    \end{minipage}
    \vspace{-0.7cm}
\end{figure*}

We first plot the EE values of the MiLAC-based schemes (SLM, TLM, and HDM) and conventional beamforming schemes (DBF and HAD) at each iteration of the proposed algorithm in Fig.~\ref{fig:conv}, along with the corresponding EE values obtained from the low-dimensional closed-form search algorithm presented in Section~\ref{sec:Asy_ana}. It can be observed that all algorithms converge within a few iterations (fewer than eight), demonstrating the fast convergence of the proposed optimization framework.

At convergence, the TLM, SLM, and HDM systems achieve EE improvements of 245\%, 231\%, and 185\%, respectively, compared with the DBF system, and 67\%, 60\%, and 38\%, respectively, compared with the HAD system. These improvements stem primarily from the fact that the TLM and SLM architectures reduce the number of RF chains from $N=64$ in the DBF system and $M=12$ in the HAD system to only $K=8$. Furthermore, although the HDM architecture still employs $M=12$ RF chains, it eliminates the need for a dedicated phase shifter at each antenna by replacing the phase-shifting network with a tunable impedance network, which consumes significantly less circuit power.

Finally, although the channel realizations considered in these simulations are not orthogonal, the low-dimensional closed-form search algorithm presented in Section~\ref{sec:Asy_ana} still provides a useful low-complexity approximation of the achievable EE without requiring the optimization problem to be solved until convergence. This is particularly useful for evaluating different system configurations and design parameters, enabling efficient performance assessment without repeatedly executing the iterative optimization algorithm.
\vspace{-0.3cm}
\subsection{Impact of Varying the Minimum Rate Constraint}

Fig.~\ref{fig:vary_r1} depicts the impact of increasing the minimum rate threshold on the EE of the considered systems. As shown in the figure, for low values of the rate threshold, the MiLAC-based systems achieve significantly higher EE compared with the DBF and HAD systems, with the TLM achieving a slight advantage over the SLM and HDM systems. As the rate threshold increases, however, the EE of all five systems degrades due to the increased rate requirements, and the performance gap between the different systems becomes smaller. It should also be noted that the optimization problem becomes infeasible for the HAD system (which employs a lower number of RF chains compared with the DBF system) when the rate threshold exceeds 3~bit/sec/Hz. At high rate thresholds (e.g., 4~bit/sec/Hz), the TLM, SLM, and HDM systems still achieve EE improvements of approximately 1.2--1.45$\times$ compared with the DBF system.

Fig.~\ref{fig:vary_r3} illustrates the EE versus the sum rate of the considered systems as the minimum rate threshold increases. It can be observed that DBF achieves higher sum rates due to its fully digital beamforming capability. However, for a given rate requirement, DBF exhibits lower EE than the MiLAC-based architectures because of its significantly higher circuit power consumption. As the minimum rate requirement increases, the EE gap between DBF and MiLAC gradually reduces. This is because the relative impact of the circuit power consumption becomes less significant at higher transmission rates, where the transmit power required to satisfy the rate constraint becomes the dominant component of the total power consumption. The different EE degradation trends observed in Fig.~\ref{fig:vary_r1} can be explained by Fig.~\ref{fig:vary_r3}. Specifically, the EE of the DBF system decreases by approximately 59\% as the rate threshold increases from 0 to 4~bit/sec/Hz, compared with reductions of 83.5\%, 86\%, and 80\% for the TLM, SLM, and HDM systems, respectively. This behavior can be attributed to the substantially higher circuit power consumption of the DBF architecture. For a given sum-rate requirement, the higher circuit power results in lower EE compared with the MiLAC-based architectures. As the minimum rate threshold increases, however, the EE gap between DBF and MiLAC-based architectures becomes smaller. This is because the increasing rate requirement leads to higher transmit power consumption, making the rate-dependent power component increasingly significant relative to the difference in circuit power consumption. Consequently, although DBF remains less energy efficient, its relative EE disadvantage diminishes at higher rate requirements.

\vspace{-0.2cm}

\subsection{Effect of Varying the Number of Transmit Antennas}
\begin{figure*}[t]
    \centering
    
    \begin{minipage}{0.32\textwidth}
        \centering
        \includegraphics[width=\columnwidth]{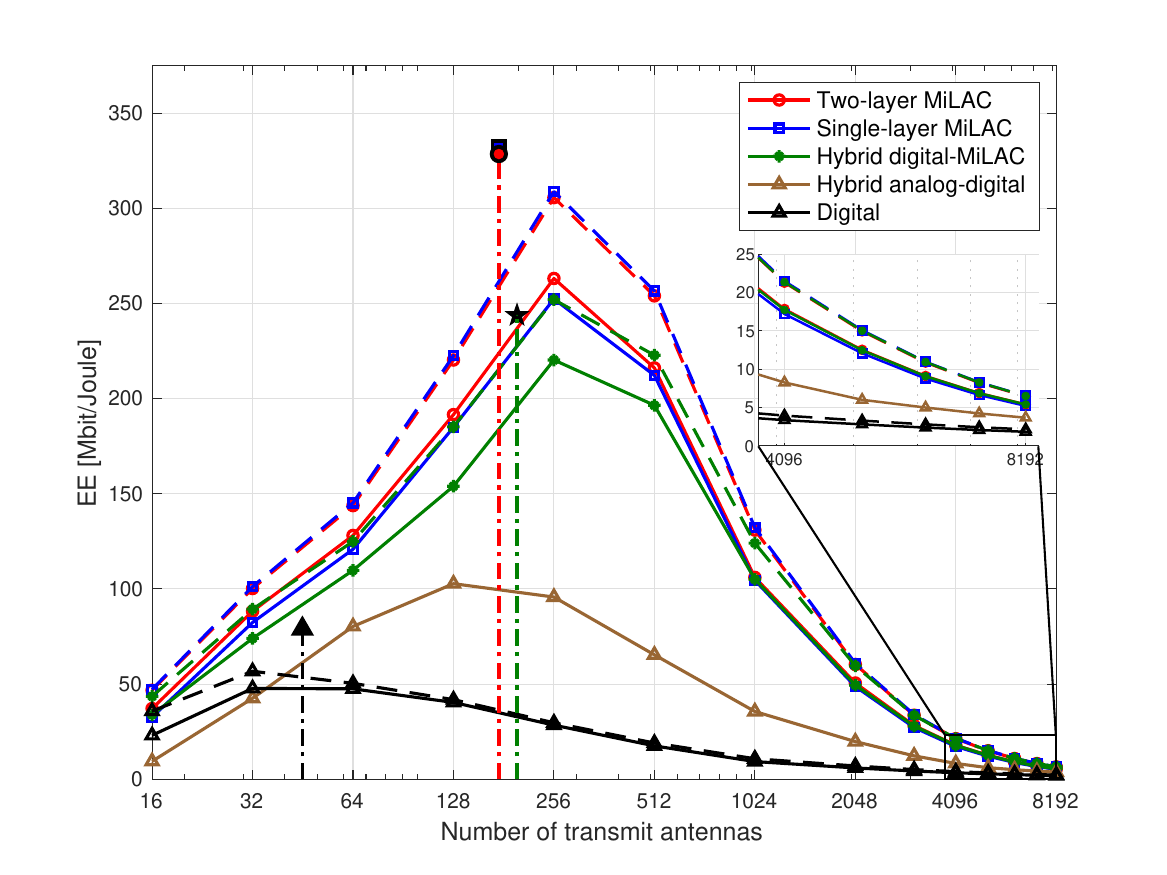}
        \vspace{-0.8cm}
        \caption{EE achieved by the proposed algorithm (solid) and the low-dimensional closed-form search (dashed) as the number of antennas increases. Stem plots indicate the approximate optimal $N$ and EE values from Lemma~\ref{lemma:2}.}
        \label{fig:vary_N1}
    \end{minipage}
    \hfill
    \begin{minipage}{0.32\textwidth}
        \centering
        \includegraphics[width=\columnwidth]{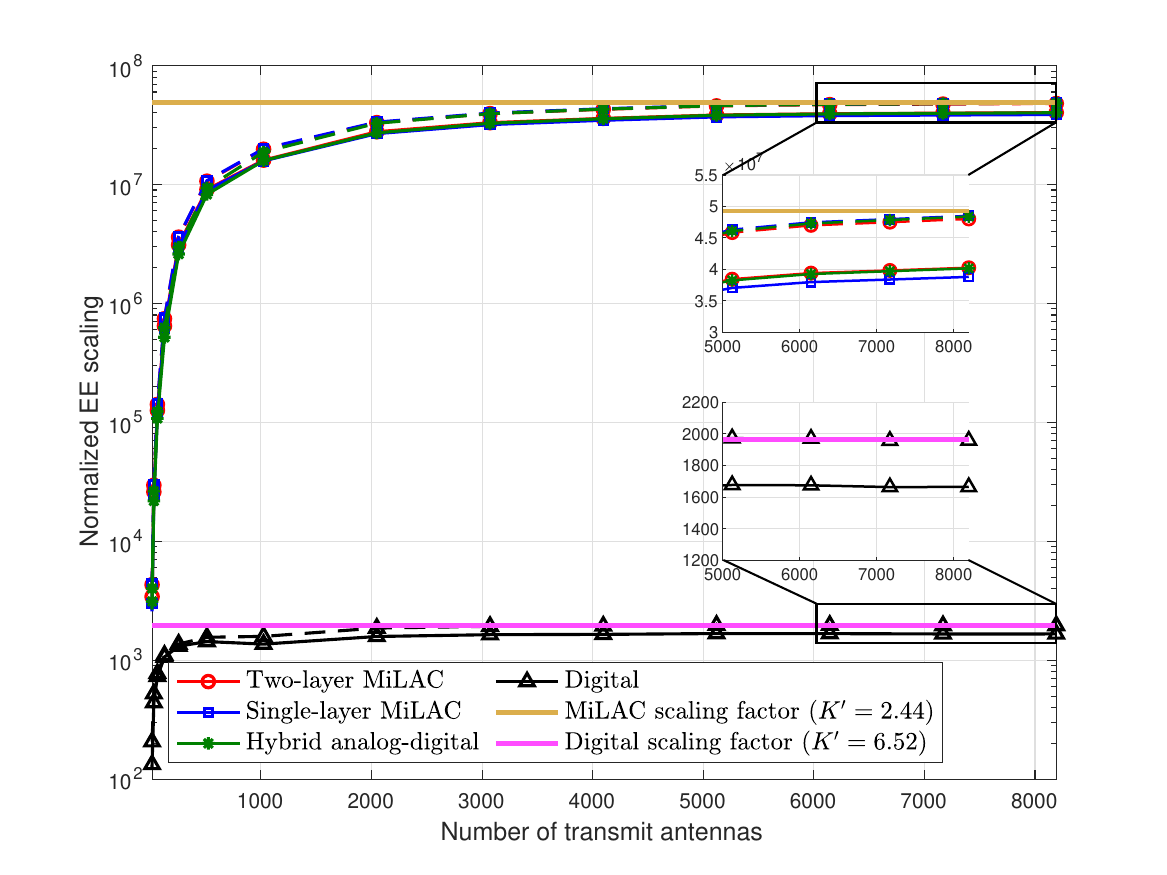}
                \vspace{-0.8cm}
        \caption{Normalized EE versus the number of antennas achieved by the proposed algorithm (solid) and the low-dimensional closed-form search (dashed) as the number of transmit antennas increases. Horizontal lines indicate the corresponding asymptotic scaling constants.}
        \label{fig:vary_N2}
    \end{minipage}
    \hfill
    \begin{minipage}{0.32\textwidth}
        \centering
        \includegraphics[width=\columnwidth]{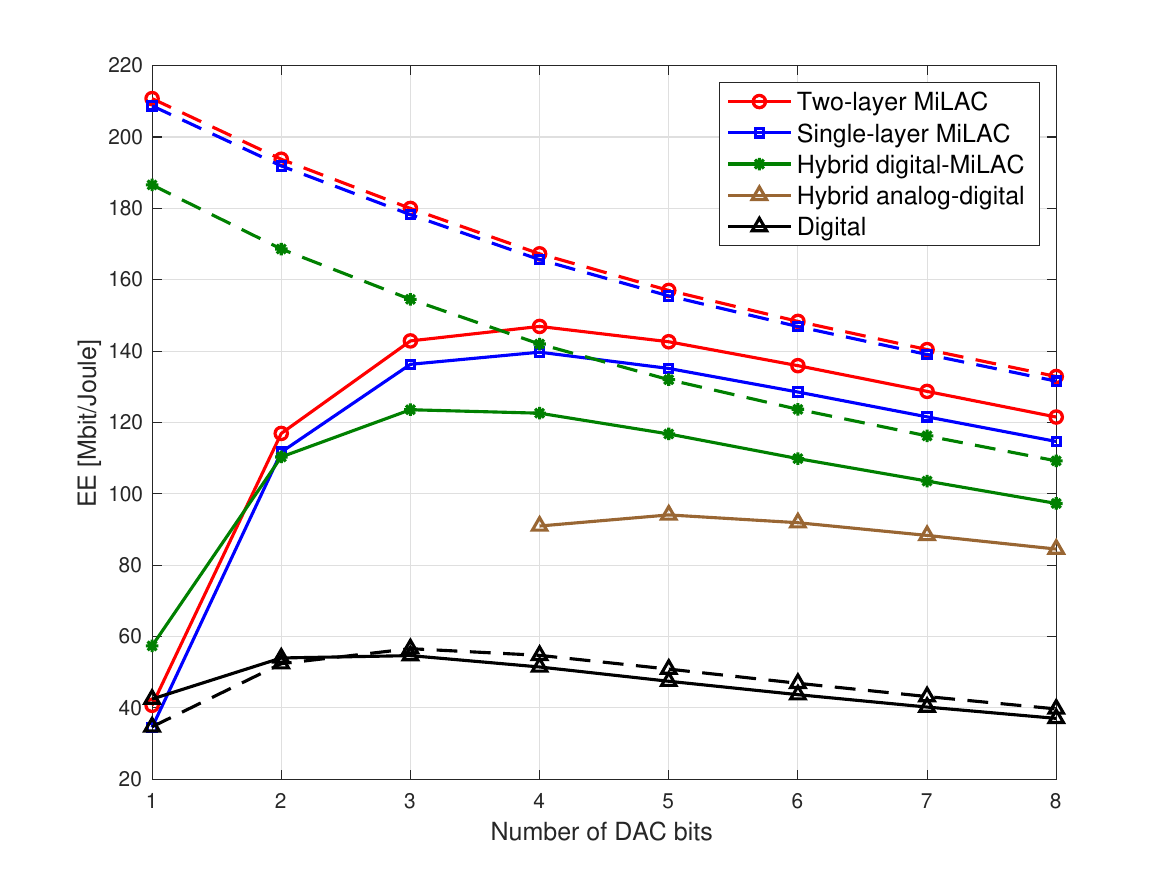}
                        \vspace{-0.8cm}
        \caption{EE achieved by the proposed algorithm (solid) and the low-dimensional closed-form search (dashed) as the number of DAC bits increases.}
        \label{fig:vary_b}
    \end{minipage}
\vspace{-0.75cm}
\end{figure*}

Fig.~\ref{fig:vary_N1} illustrates the impact of the number of transmit antennas $N$ on the EE of the considered architectures. It can be observed that DBF and HAD achieve their maximum EE at relatively small array sizes, specifically between $N=32$ and $64$ for DBF and between $N=128$ and $256$ for HAD. In contrast, the MiLAC-based architectures attain their maximum EE at larger array sizes, between $N=256$ and $512$. Although the circuit power consumption of the MiLAC-based architectures increases quadratically with $N$ due to the impedance-tuning network, their overall hardware power consumption remains substantially lower than that of the RF components required by DBF and HAD over small-to-moderate array sizes. Consequently, increasing $N$ initially provides sufficient sum-rate gains to compensate for the additional circuit power in the MiLAC-based architectures, shifting their EE-maximizing array size to larger $N$.

The stem plots in Fig.~\ref{fig:vary_N1} represent the approximate optimal antenna number $N^{\mathrm{opt}}$ and the corresponding peak EE approximated by Lemma~\ref{lemma:2}. The approximate values of $N^{\mathrm{opt}}$ closely track the array size at which the simulated EE curves actually peak for all four architectures, and the corresponding EE values from~\eqref{eq:eta_at_peak_full} likewise align well with the simulated peaks, despite Lemma~\ref{lemma:2} being derived under the idealized orthogonal-channel and negligible-quantization-noise assumptions of Section~\ref{sec:Asy_ana}. This confirms that, even though the closed-form EE search of Algorithm~\ref{alg:rate_margin_SLM} is derived for an idealized system, the resulting stationarity condition still captures the essential trade-off between beamforming gain and circuit power that governs the true EE-optimal array size, making it a useful low-complexity predictor of $N^{\mathrm{opt}}$.

As $N$ becomes very large, the asymptotic EE of the MiLAC-based architectures decays faster than that of DBF, scaling as $\ln(N)/N^2$ compared with $\ln(N)/N$ for DBF, as derived in Section~\ref{sec:Asy_ana}. To validate these theoretical scaling orders, Fig.~\ref{fig:vary_N2} plots the corresponding normalized EE metrics as a function of $N$. Specifically, the EE is normalized by its respective theoretical decay rate, i.e., $\eta^g N^2/\ln(N)$ for $g\in\{\mathrm{SLM},\mathrm{TLM},\mathrm{HDM}\}$ and $\eta^g N/\ln(N)$ for $g=\mathrm{DBF}$, where $\eta^g$ denotes the EE of architecture $g$. This normalization removes the dominant dependence on $N$ predicted by the asymptotic analysis. Therefore, if the derived scaling laws accurately describe the large-antenna behavior, the normalized EE should converge to the corresponding asymptotic scaling factor.

As depicted in Fig.~\ref{fig:vary_N2}, these normalized metrics gradually converge to constant asymptotic factors, indicated by the horizontal lines in Fig.~\ref{fig:vary_N2}. As derived in Section~\ref{sec:Asy_ana}, these factors are $\frac{6K'}{P_\mathrm{IT}\ln 2}$ for the SLM, TLM, and HDM architectures and $\frac{2K'}{P_\mathrm{RF}}$ for DBF. Here, $K'$ denotes the average number of users with non-zero rate margins at the asymptotic EE optimum, with $K'=2.44$ for the MiLAC-based architectures and $K'=6.52$ for DBF\footnote{For the Monte Carlo results, the asymptotic scaling factors are evaluated using the empirical average of $K'$ across channel realizations.}. The larger $K'$ observed for DBF is consistent with its substantially higher hardware power consumption, particularly as $N$ increases. In this case, the EE-optimal solution tends to operate at a higher overall sum rate, with more users contributing non-zero rate margins. Thus, the larger $K'$ indicates that DBF exploits multi-user transmission more extensively at its asymptotic EE optimum. Moreover, the convergence of the MiLAC-based architectures to their asymptotic factors occurs at substantially larger $N$ than for DBF. This is because $P_\mathrm{IT}$ is relatively small compared with $P_\mathrm{RF}$, so the quadratic circuit-power term in the MiLAC architectures becomes dominant only at much larger antenna-array sizes. Consequently, the asymptotic scaling behavior of the MiLAC-based architectures becomes clearly visible only in the very large-antenna regime.

\vspace{-0.4cm}

\subsection{Effect of Varying the Number of DAC Bits}
Fig.~\ref{fig:vary_b} demonstrates the impact of varying the number of DAC bits on the EE of the considered systems. While increasing the number of bits reduces the quantization noise and improves the achievable performance, it also increases the power consumption of the ADC/DACs according to the model provided in Table~\ref{tab:2}. When the number of DAC bits is one, the EE of the HDM system outperforms those of the SLM and TLM systems by 50\% and 65\%, respectively, due to the digital beamforming stage in the HDM architecture, which is capable of preprocessing the quantization noise. However, as the number of DAC bits increases, the SLM and TLM architectures achieve better EE performance due to their reduced hardware power consumption. Furthermore, since the EE values obtained from the low-dimensional closed-form search are derived under the assumption of negligible quantization noise, they provide inaccurate approximations of the actual system EE for low-resolution DACs. It can also be observed that the HAD architecture requires at least four DAC bits to satisfy the rate constraints and render the optimization problem feasible.  

When the number of DAC bits exceeds four, however, the quantization noise becomes sufficiently small, and further increasing the DAC resolution provides only very marginal performance improvements. Consequently, due to the additional power consumption introduced by higher-resolution ADC/DACs, the system EE degrades. As a result, a DAC resolution should be selected to balance the reduction of quantization noise and the additional ADC/DAC power consumption.
\vspace{-0.4cm}
\subsection{Effect of Varying the Transmit Power Budget}

\begin{figure}
         \centering 
         \includegraphics[width=0.65\columnwidth]{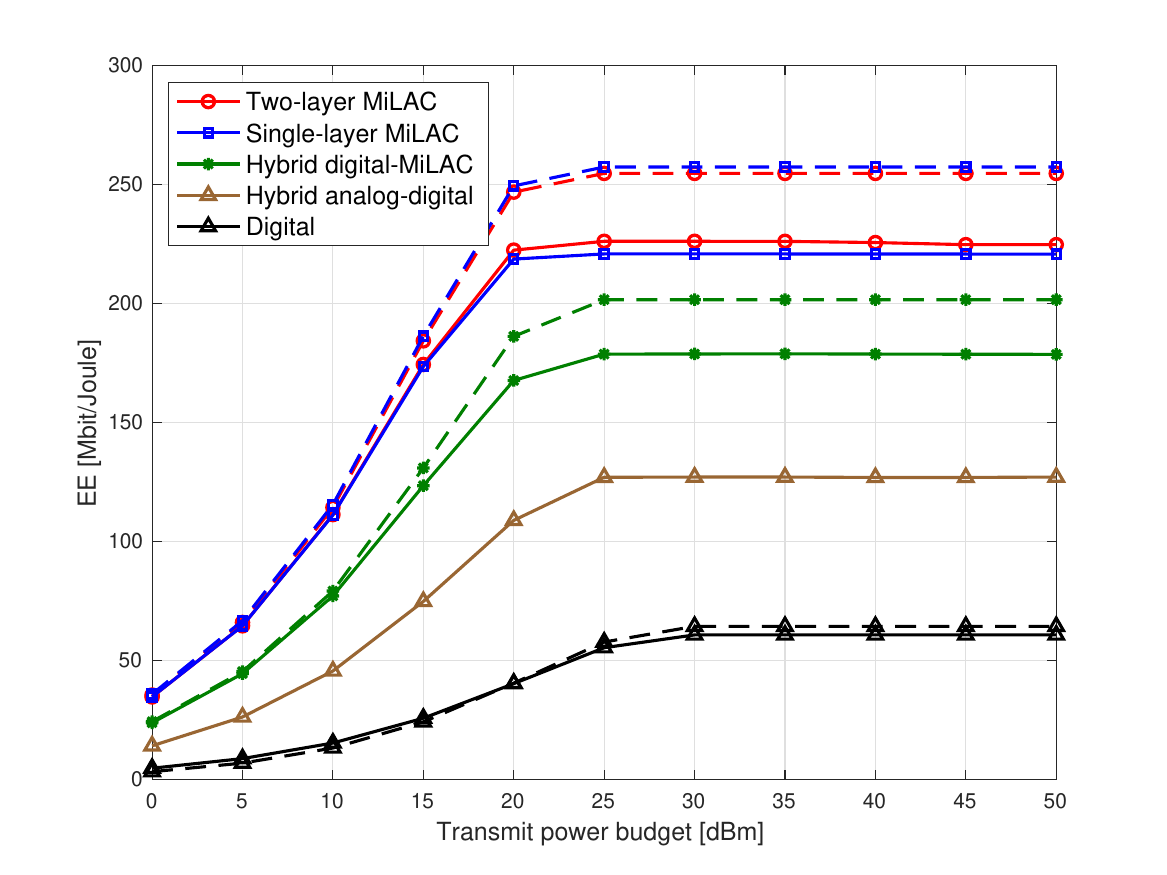}
                         \vspace{-0.5cm}
        \caption{EE achieved by the proposed algorithm (solid) and the low-dimensional closed-form search (dashed) as the transmit power budget increases.}
        \label{fig:vary_P}
\end{figure}
Fig.~\ref{fig:vary_P} depicts the impact of varying the transmit power budget on the EE of the considered systems. For this case only, the rate threshold is set to zero (i.e., \(r=0\)) to avoid infeasible optimization problems at low power budgets and to provide a complete characterization of the system behavior. It can be observed that increasing the power budget initially improves the EE of all considered systems by enabling higher achievable rates. However, when the power budget exceeds \unit[30]{dBm}, the systems operate below the available power limit, as further increasing the transmit power leads to a faster increase in power consumption than the corresponding rate improvement, thereby degrading the EE. Moreover, for power budgets up to \unit[20]{dBm}, the EE of the SLM and TLM architectures remains very close, with noticeable differences emerging only at higher power budgets. This is because, in the low-power regime, both architectures are primarily limited by the available transmit power, and their different circuit power consumptions have a relatively small impact on the overall EE. As the power budget increases, the higher beamforming flexibility of the TLM architecture allows it to better exploit the available transmit power, resulting in improved EE compared with the SLM architecture.
\vspace{-0.5cm}
\subsection{Effect of the Number of Users}
\begin{figure}
         \centering 
         \includegraphics[width=0.65\columnwidth]{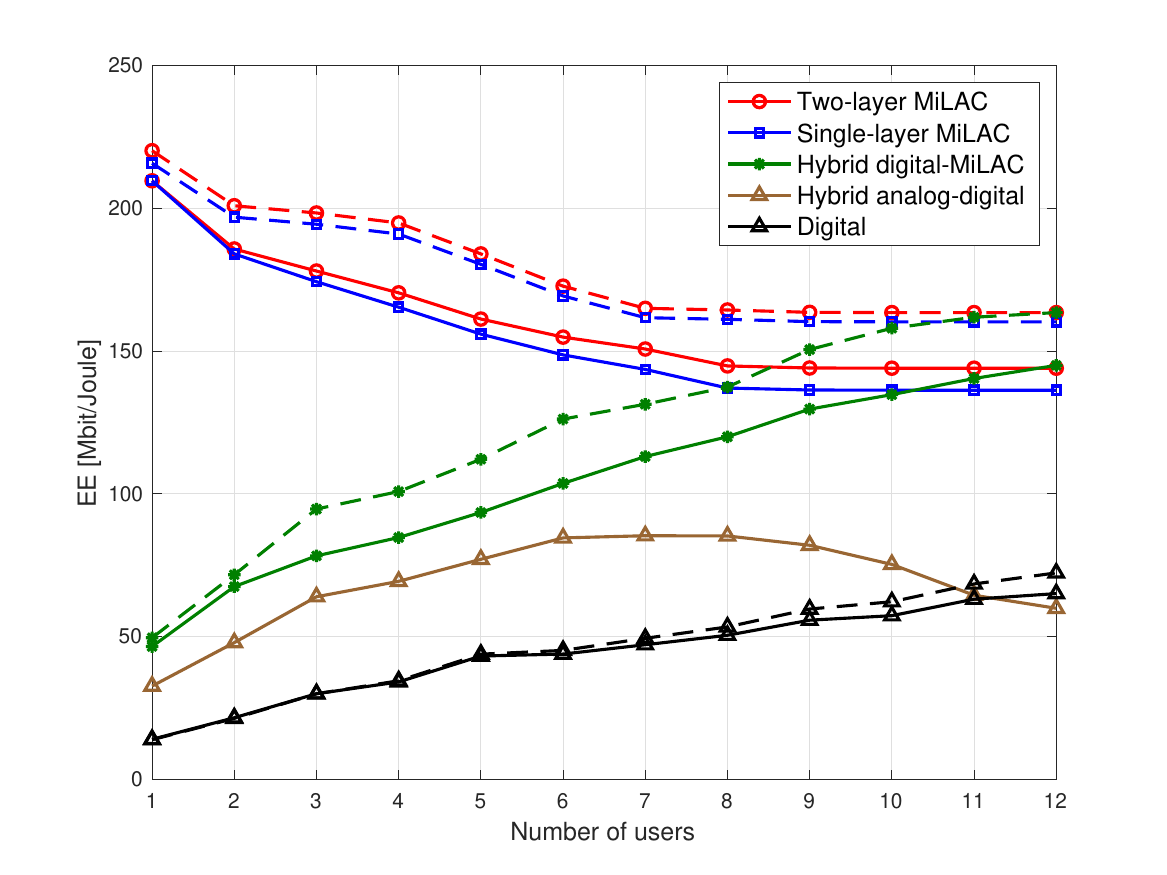}
                         \vspace{-0.5cm}
        \caption{EE achieved by the proposed algorithm (solid) and the low-dimensional closed-form search (dashed) as the number of users increases.}
        \label{fig:vary_K}
\end{figure}

Fig.~\ref{fig:vary_K} depicts the impact of the number of users on the EE of the considered systems. Interestingly, different system architectures exhibit different trends. For the TLM and SLM systems, increasing the number of users requires additional RF chains, as the number of RF chains in these architectures is equal to the number of users. This results in increased power consumption, whose growth rate exceeds the gain obtained from multiuser diversity, thereby reducing the EE. On the other hand, the number of RF chains in the HDM architecture is fixed to 12 in our simulations. Therefore, serving more users does not increase the RF-chain power consumption and can improve the sum rate by exploiting multiuser diversity, which in turn enhances the system EE. When the number of users reaches 12, the EE of the TLM and HDM architectures converges to a similar value. A similar trend is also observed for the DBF system, where the number of RF chains is equal to the number of transmit antennas.

The HAD system follows another interesting trend. Since the number of RF chains is also fixed to 12, the power consumption does not increase as the system serves more users. This, together with the exploitation of multiuser diversity, improves the EE up to $K=7$ users, after which the EE starts to decrease. This is because all users must satisfy the minimum rate constraint, and, given the limited degrees of freedom available in this system compared with the other architectures, additional transmit power is required to support the increased number of users, resulting in higher power consumption and reduced EE.
\vspace{-0.4cm}
\section{Conclusion} \label{sec:conc}
\vspace{-0.2cm}
In this paper, we investigated the EE of three MiLAC-based architectures, namely, SLM, TLM, and HDM. We formulated their EE maximization problems and developed reduced-dimensional SCA-based solutions, together with a low-dimensional closed-form search approach for efficient EE evaluation. We further characterized the asymptotic EE behavior, showing that MiLAC-based architectures scale as $\ln(N)/N^2$, whereas DBF scales as $\ln(N)/N$ under the considered power consumption models. The numerical results demonstrate that MiLAC-based architectures can substantially outperform conventional DBF and HAD over practically relevant antenna-array sizes. Among the considered architectures, TLM generally achieves the highest EE, while HDM becomes more advantageous with low-resolution RF chains due to its ability to mitigate quantization noise. The EE advantage of MiLAC architectures also depends on the number of users and antennas. These results highlight the potential of MiLAC architectures for energy-efficient large-scale MIMO systems and identify the regimes in which each architecture is most beneficial.

\vspace{-0.4cm}

\ifCLASSOPTIONcaptionsoff
  \newpage
\fi





\appendices

\bibliographystyle{IEEEtran}
\bibliography{IEEEabrv,Bibliography}

\begin{thebibliography}{10}
\providecommand{\url}[1]{#1}
\csname url@rmstyle\endcsname
\providecommand{\newblock}{\relax}
\providecommand{\bibinfo}[2]{#2}
\providecommand\BIBentrySTDinterwordspacing{\spaceskip=0pt\relax}
\providecommand\BIBentryALTinterwordstretchfactor{4}
\providecommand\BIBentryALTinterwordspacing{\spaceskip=\fontdimen2\font plus
\BIBentryALTinterwordstretchfactor\fontdimen3\font minus \fontdimen4\font\relax}
\providecommand\BIBforeignlanguage[2]{{%
\expandafter\ifx\csname l@#1\endcsname\relax
\typeout{** WARNING: IEEEtran.bst: No hyphenation pattern has been}%
\typeout{** loaded for the language `#1'. Using the pattern for}%
\typeout{** the default language instead.}%
\else
\language=\csname l@#1\endcsname
\fi
#2}}

\bibitem{2015_Bjornson}
E.~Bj\"ornson, \emph{et~al.}, ``Optimal design of energy-efficient multi-user {MIMO} systems: Is massive {MIMO} the answer?'' \emph{IEEE Trans. Wireless Commun.}, vol.~14, no.~6, pp. 3059--3075, 2015.

\bibitem{2025_You}
C.~You \emph{et~al.}, ``Next generation advanced transceiver technologies for {6G} and beyond,'' \emph{IEEE J. Sel. Areas Commun.}, vol.~43, no.~3, pp. 582--627, 2025.

\bibitem{2016_Sohrabi}
F.~Sohrabi and W.~Yu, ``Hybrid digital and analog beamforming design for large-scale antenna arrays,'' \emph{IEEE J. Sel. Top. Signal Process.}, vol.~10, no.~3, pp. 501--513, 2016.

\bibitem{2017_Abbas}
W.~B. Abbas, \emph{et~al.}, ``Millimeter wave receiver efficiency: A comprehensive comparison of beamforming schemes with low resolution {ADCs},'' \emph{IEEE Trans. Wireless Commun.}, vol.~16, no.~12, pp. 8131--8146, 2017.

\bibitem{2025_Magbool1}
A.~Magbool, \emph{et~al.}, ``A survey on integrated sensing and communication with intelligent metasurfaces: Trends, challenges, and opportunities,'' \emph{IEEE Open J. Commun. Soc.}, vol.~6, pp. 7270--7318, 2025.

\bibitem{2026_Magbool}
------, ``Beyond the limits of rigid arrays: Flexible intelligent metasurfaces for next-generation wireless networks,'' \emph{arXiv preprint arXiv:2603.11886}, 2026.

\bibitem{2026_Zhu}
L.~Zhu \emph{et~al.}, ``A tutorial on movable antennas for wireless networks,'' \emph{IEEE Commun. Surv. Tutor.}, vol.~28, pp. 3002--3054, 2026.

\bibitem{2025_Nerini}
M.~Nerini and B.~Clerckx, ``Analog computing for signal processing and communications – part {I}: Computing with microwave networks,'' \emph{IEEE Trans. Signal Process.}, vol.~73, pp. 5183--5197, 2025.

\bibitem{2025_Nerini1}
------, ``Analog computing for signal processing and communications – part {II}: Toward gigantic {MIMO} beamforming,'' \emph{IEEE Trans. Signal Process.}, vol.~73, pp. 5198--5212, 2025.

\bibitem{2026_Nerini2}
------, ``Physics-compliant modeling and optimization of {MIMO} systems aided by microwave linear analog computers,'' \emph{arXiv preprint arXiv:2602.19379}, 2026.

\bibitem{2026_Nerini3}
------, ``Capacity of {MIMO} systems aided by microwave linear analog computers ({MiLACs}),'' \emph{arXiv preprint arXiv:2506.05983}, 2025.

\bibitem{2026_Wu}
Z.~Wu, \emph{et~al.}, ``Microwave linear analog computer ({MiLAC})-aided multiuser {MISO}: Fundamental limits and beamforming design,'' \emph{arXiv preprint arXiv:2601.10060}, 2026.

\bibitem{2026_Nerini4}
M.~Nerini and B.~Clerckx, ``{MIMO} systems aided by microwave linear analog computers: Capacity-achieving architectures with reduced circuit complexity,'' \emph{IEEE Trans. Wireless Commun.}, vol.~25, pp. 14\,597--14\,610, 2026.

\bibitem{2026_Zhang1}
Y.~Zhang, \emph{et~al.}, ``Beamforming design for stem-connected microwave linear analog computer ({MiLAC})-aided multiuser {MISO} downlinks,'' \emph{arXiv preprint arXiv:2606.14499}, 2026.

\bibitem{2026_Nerini5}
M.~Nerini and B.~Clerckx, ``Microwave linear analog computer ({MiLAC}) for simultaneous active and passive beamforming,'' \emph{arXiv preprint arXiv:2605.31549}, 2026.

\bibitem{2026_Peng}
Y.~Peng, \emph{et~al.}, ``Hybrid digital and microwave linear analog computer ({MiLAC})-aided beamforming for multiuser {MIMO-OFDM} systems,'' \emph{arXiv preprint arXiv:2604.26532}, 2026.

\bibitem{2026_Liu}
Z.~Liu, \emph{et~al.}, ``Microwave linear analog computer {MiLAC}-aided {MIMO} radar sensing: Transmit beamforming design and {DoA} estimation,'' \emph{arXiv preprint arXiv:2605.21020}, 2026.

\bibitem{2026_Zhang2}
Y.~Zhang, \emph{et~al.}, ``How many {RF} chains does a microwave linear analog computer ({MiLAC}) need to match the fully-digital cram{\'e}r--rao bound?'' \emph{arXiv preprint arXiv:2606.23986}, 2026.

\bibitem{2026_Zhou}
X.~Zhou, \emph{et~al.}, ``Two-layer microwave linear analog computer ({MiLAC})-aided multi-user {MISO} networks,'' \emph{arXiv preprint arXiv:2604.24303}, 2026.

\bibitem{2026_Zhang}
Y.~Zhang, \emph{et~al.}, ``Quantization-aware {EE} optimization and {SE-EE} tradeoff for {MiLAC}-aided {MU-MISO} beamforming,'' \emph{arXiv preprint arXiv:2604.24538}, 2026.

\bibitem{2025_Magbool}
A.~Magbool, \emph{et~al.}, ``Robust beamforming design for fairness-aware energy efficiency maximization in {RIS}-assisted {mmWave} communications,'' \emph{IEEE Trans. Commun.}, vol.~73, no.~4, pp. 2648--2662, 2025.

\bibitem{2019_Dai}
J.~Dai, \emph{et~al.}, ``Achievable rates for full-duplex massive {MIMO} systems with low-resolution {ADCs/DACs},'' \emph{IEEE Access}, vol.~7, pp. 24\,343--24\,353, 2019.

\bibitem{2016_Buzzi}
S.~Buzzi, \emph{et~al.}, ``A survey of energy-efficient techniques for {5G} networks and challenges ahead,'' \emph{IEEE J. Sel. Areas Commun.}, vol.~34, no.~4, pp. 697--709, 2016.

\bibitem{2018_Ribeiro}
L.~N. Ribeiro, \emph{et~al.}, ``Energy efficiency of {mmWave} massive {MIMO} precoding with low-resolution {DACs},'' \emph{IEEE J. Sel. Topics Signal Process.}, vol.~12, no.~2, pp. 298--312, 2018.

\bibitem{2017_Roth}
K.~Roth and J.~A. Nossek, ``Achievable rate and energy efficiency of hybrid and digital beamforming receivers with low resolution {ADC},'' \emph{IEEE J. Sel. Areas Commun.}, vol.~35, no.~9, pp. 2056--2068, 2017.

\bibitem{2014_Grant}
M.~Grant and S.~Boyd, ``{CVX}: Matlab software for disciplined convex programming, version 2.1,'' \url{https://cvxr.com/cvx}, Mar. 2014.

\bibitem{2014_Rusek}
F.~Rusek, \emph{et~al.}, ``Scaling up {MIMO}: Opportunities and challenges with very large arrays,'' \emph{EEE Signal Process. Mag.}, vol.~30, no.~1, pp. 40--60, 2013.

\bibitem{2016_Marzetta}
T.~L. Marzetta, \emph{et~al.}, \emph{Fundamentals of Massive {MIMO}}.\hskip 1em plus 0.5em minus 0.4em\relax Cambridge University Press, 2016.

\bibitem{2013_Ngo}
H.~Q. Ngo, \emph{et~al.}, ``Energy and spectral efficiency of very large multiuser {MIMO} systems,'' \emph{IEEE Trans. Commun.}, vol.~61, no.~4, pp. 1436--1449, 2013.

\bibitem{1996_Corless}
R.~M. Corless, \emph{et~al.}, ``On the lambert {W} function,'' \emph{Advances in Computational Mathematics}, vol.~5, no.~1, pp. 329--359, 1996.

\bibitem{2012_Isheden}
C.~Isheden, \emph{et~al.}, ``Framework for link-level energy efficiency optimization with informed transmitter,'' \emph{IEEE Trans. Wireless Commun.}, vol.~11, no.~8, pp. 2946--2957, 2012.

\bibitem{2014_Huang}
Y.~Huang and L.~Qiu, ``On the energy efficiency–spectral efficiency tradeoff in random beamforming,'' \emph{IEEE Wirel. Commun. Lett.}, vol.~3, no.~5, pp. 461--464, 2014.

\end{thebibliography}

%


\end{document}